\newtheorem{theo}{Theorem}[section]
\newtheorem{defi}[theo]{Definition}
\newtheorem{prop}[theo]{Proposition}
\newtheorem{remark}[theo]{Remark}
\newtheorem{hypo}[theo]{Hypothesis}
\newcommand{\itemr}{\item[$\rightarrow$]}
\newcommand{\itemb}{\item[$\bullet$]}
\newcommand{\R}{\mathbb{R}}
\newcommand{\E}{\mathbb{E}}
\newcommand{\dx}{\textrm{d}}
\title{Crawling migration under chemical signalling: a stochastic particle model}
\author{Christ\`{e}le Etchegaray \footnote{Institut de Math\'{e}matiques de Toulouse, CNRS UMR 5219, Universit\'{e} Paul Sabatier, 118, route de Narbonne, F-31062 Toulouse Cedex 9. ORCID 0000-0001-6805-2120.}         \and
        Nicolas Meunier \footnote{ MAP5, UMR CNRS 8145, Universit\'{e} Paris Descartes, Sorbonne Paris Cit\'{e}, 45 rue des Saints P\`{e}res, Paris Cedex 6 75270, France.} 
}
\begin{document}
\selectlanguage{english}
\maketitle

 \section{Introduction}
 
 Cell migration is a fundamental process involved in physiological phenomena such as the immune response and morphogenesis \citep{Anon2012Cell-crawling-m,morphogenesis}, but also in pathological processes, such as the development of tumor metastasis \citep{Friedl2003Tumour-cell-inv}. 
These functions are effectively ensured because cells are active systems that adapt to their environment. Indeed, their internal organization relies on multiscale interactions between polymers and molecules based on out-of-equilibrium reactions \citep{Lauffenburger1996Cell-migration:}. \par 
We are interested in this paper in cells crawling on an adhesive surface. They spread and form extensions called protrusions, that are mechanically coupled to the substrate thanks to molecular adhesion complexes. This way, internal forces can be transmitted to the substrate, leading to motion. 
 \par
Cell protrusions in the case of crawling are divided in two types: the \emph{lamellipodia} are wide and flat while \emph{filopodia} are finger-like extensions. It has been stated that the fluctuations in the protrusive activity are important for the long-term cell behaviour \citep{Caballero2014Protrusion-fluc,Mattila2008Filopodia:-mole,Krause2014Steering-cell-m}. In particular, some trajectories are very efficient to explore a large territory, while others are more Brownian-like. Typical models of cell trajectories (with ou without external cues) consist in active brownian particle models \citep{Stokes1991,Schienbein1993,Romanczuk2012Active-Brownian} that can be fitted to experimental trajectories to provide quantifications of trajectories. However, these models consist typically in a Langevin equation on the particle's velocity that includes a positive (phenomenological) feedback loop, so that no specific feature of cell migration is described. In \citep{etchegaray:tel-01533458}, a first stochastic model of cell trajectories based on the protrusive activity was built and was proven able to capture the diversity in the observed trajectories.
\par 

It is known that the environment is able to guide cell migration either mechanically (ridigity and adhesiveness of the substrate, obstacles) or chemically if some molecular specie attracts or repulse the cell (so-called chemotaxis phenomenon). In both cases, the cell senses its outside using molecular receptors at the membrane and at protrusions tips, leading to an intracellular response \citep{heckman2013filopodia,Lidke2005Reaching-out-fo}. This ability is fundamental, since many cells need to attain some targets by following signals (antibodies, pathogens, etc). However, the cell's reaction to a signal is not clear, since its internal self-polarisation machinery may contradict its external awareness. \par 
In this work, we use the 2D particle model for cell trajectories developped in \citep{etchegaray:tel-01533458} and take into account a gradient in attractive chemical signal, that may vary in time.
We show that the resulting stochastic model is a well-posed non-homoegeneous markovian process, and provide cell trajectories in different settings.

	\section{Model construction}
In this part, we construct the 2D stochastic trajectory model developped in \citep{etchegaray:tel-01533458}, and show how to take into account a gradient in external signal. 	
	
We choose large space and time scales compared to the cell size and activity, so that it is natural to adopt an active particle approach. The cell is therefore a point (its center of mass) in a microscopic setting, submitted to the \textbf{force balance principle}. 

\subsection{Velocity model}
At each time $t$, write $\vec{V}_{t}$ for the cell velocity, with polar coordinates $(v_{t},\theta_{t})$. The force equilibrium principle states that the vectorial sum of all forces applied to the cell equals zero. In this case appear
\begin{itemize}
\itemr a passive force: the \textbf{friction force} exerted by the substrate on the cell due to motion, that writes $\vec{f} = -\gamma \vec{V}_t$, with $\gamma$ the global friction coefficient.
\itemr active forces related to the protrusion process. Filopodial protrusions can be considered as good readouts  \citep{Caballero2014Protrusion-fluc}, so that in the following, we consider only \textbf{filopodial forces}.
\end{itemize}

\begin{figure}
\centering
\includegraphics[scale=0.7]{./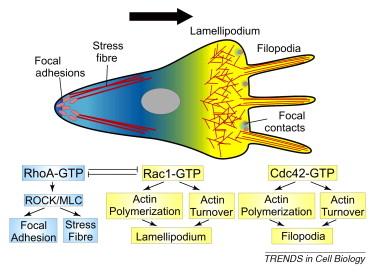}
\caption{Scheme of a polarised crawling cell: protrusive structures at the front (lamellipodium, filopodia), and contractile fibers at the back. The asymmetric activity combines with an asymmetric repartition of molecular regulators organized in feedback loops. Source: \citep{mayor2010keeping} }
\end{figure}

Finally, we obtain 

\begin{equation}\label{velocity_model}
\gamma \vec{V_t} = \sum_{i=1}^{N_t} \vec{F_i}(t)\, ,
\end{equation}
with $N_t$ the number of filopodia adhering on the substrate at time $t$, and $(\vec{F_i}(t))_{i}$ the filopodial forces. The cell motion is then entirely described by the protrusion forces.\par 

\begin{hypo}\label{hypo2}
Each filopodial force vector is unitary and constant in time.
\end{hypo}
Now, denoting $\theta_i = arg(\vec{F_i})$, one can write
\begin{displaymath}
\vec{F_i} = \begin{pmatrix}
\cos(\theta_i) \\ \sin(\theta_i)
\end{pmatrix}.
\end{displaymath}

Modelling cell motion then accounts to modelling the time evolution of the filopodial population, structured by their orientation.

\subsection{Protrusion model}
Each filopodium is characterized by a quantitative parameter, its orientation $\theta \in [0,2\pi)$. Therefore, we use a measure valued process for the stochastic evolution of the set of filopodia, as in ecological population models (\cite{fournier_microscopic_2004} and related works).
Let us denote $\mathcal{M}_F(\overline{\chi})$ the set of positive finite measures on $\overline{\chi}=[0,2\pi]$, equipped with the weak topology. Notice that as $\overline{\chi}$ is compact, weak and vague topologies on $\mathcal{M}_F(\overline{\chi})$ coincide. Write 
$\mathcal{M}$ for the subset of $\mathcal{M}_F(\overline{\chi})$ composed of all finite point measures. Then, a filopodium of orientation $\theta$ is described by a Dirac measure $\delta_\theta$ on $\chi$, and the whole population by 
\begin{displaymath}
\nu_t = \sum_{i=1}^{N_t} \delta_{\theta_i} \in \mathcal{M}.
\end{displaymath}

For any measurable function $f$ on $\overline{\chi}$ and any $\mu \in \mathcal{M}_F(\overline{\chi})$, we write $<\mu,f> = \int_{\chi} f(\theta) \mu(\dx \theta)$. In particular, $<\nu_t,f> = \sum_{i=1}^{N_t} f(\theta_i)$, and the population size corresponds to $N_t = <\nu_t,1>$.
A simple way to express the velocity equation \eqref{velocity_model} together with hypothesis \ref{hypo2} is to write

\begin{displaymath}
\gamma \vec{V_t} = \begin{pmatrix}
<\nu_t,\cos> \\ <\nu_t,\sin>
\end{pmatrix}.
\end{displaymath}

Now, the cell motion is entirely described by a measure-valued markovian jump process $(\nu_t)_t$, as in adaptive stuctured population models. We have now to define the events ruling its evolution. 

\begin{itemize}
\item First, filopodia \textbf{appear isotropically}. It is responsible for the spontaneous activity that is observed experimentally, that also probes the cell's surroundings. We write $\mathbf{c}$ for the appearance or creation rate.\par 
\item Each filopodium ends up disappearing: the disappearance or \textbf{death} rate is denoted by $\mathbf{d}$.
\item \textbf{Polarisation} is characterized by a morphological and functional asymmetry visible both on cell shape and at the microscopic scale. Here, we account for polarisation by its feedback on the protrusive activity. Two phenomena have to be distinguished:
\begin{itemize}
\itemr The formation of a protrusion is induced by several microscopic regulators and generates a \textbf{local} positive feedback on the protrusive machinery. Following that, we assume that each filopodium is able to \textbf{reproduce}. Denote $\mathbf{r(t,\theta_i,\nu_t)}$ the individual reproduction rate of a filopodium of orientation $\theta_i$ at time $t$.
 
\itemr \textbf{Polarisation} is also reinforced by intracellular actin flows, resulting from the protrusive activity and favoring the concentration of protrusions in a single direction, see \cite{Maiuri2015Actin_flows_med}. Moreover, it is stated in the same work that the space-averaged actin flow velocity over the cell is proportional to $\vec{V}$. As a consequence, we consider a reproduction rate that is positively coupled to $\alpha \vec{V}$ for $\alpha \geq 0$, imposing a non-local feedback. 
\end{itemize}
\item When a filopodium reproduces, the new protrusion may have the same orientation, or a slightly modified one due to the flucturations arising in the cell's internal signalling pathways involved. This is described by a \textbf{mutation event} for the orientation of the "offspring". We write $\boldsymbol{\mu}$ for the constant mutation rate. The new orientation is then chosen following a probability distribution $\mathbf{g(\cdot;\theta_i)}$ assumed centered in the parent's orientation $\theta_i$, with a constant variance.
\item Protrusions are cellular structures composed of actin polymers assembled in either networks or bundles, for which \textbf{the resources are limited}. Therefore, we assume that the more protrusions exist, the lower the creation and reproduction rates get: for a carrying capacity $\lambda$ of protrusions, we consider the creation rate $c\times (1-\frac{N_t}{\lambda})_+$, and individual reproduction rates of the form $r(t,\theta_i,\nu) \times (1-\frac{N_t}{\lambda})_+$. The formation of protrusions hence linearly decreases for an increasing protrusion population size. It is also clear that for an initial number of protrusions below $\lambda$, the population size will remain below it for all times, and the positive part is in this case not useful.
\end{itemize}

The possible events are summed up in the following graph:

\begin{center}
\centering
\begin{tabular}{cccclll}
Creation (global) &&&&  \\
\textcolor{blue}{$c\times (1-\frac{N_t}{\lambda})_+$} &&&&\\
   & & Clone &  &  \\
Reproduction (individual) &  $\nearrow$ & \textcolor{blue}{$1-\mu$} & & \\
 \textcolor{blue}{$r(t,\theta_i, \nu)\times (1-\frac{N_t}{\lambda})_+$} &  $\searrow$ &&&  \\
 && Mutation&  $\longrightarrow$ & Choice of $\theta$ \\
Death (individual)&& \textcolor{blue}{$\mu$} &&\textcolor{blue}{$ g(\theta ; \theta_i)$}\\
\textcolor{blue}{$d$}&&&&
\end{tabular}
\end{center}

\begin{remark}
When there is no interaction between individuals, the process $(N_t)_t$ simply follows an immigration, birth and death dynamics, and mathematical information can be derived. In particular, the branching property still holds. This is no longer the case when adding interactions. For example, if the interaction relies on $\vec{V}_t$, then knowing only $N_t$ is not sufficient and one has to know about the structured quantities $(N^{\theta_1},N^{\theta_2},...)$ at all time.
\end{remark}


\paragraph{Discussion}
The main ingredient of the model is the feedback of the cell velocity on the protrusive activity. We choose in this paper to include its effect in the local reproduction rate. Another choice consists in taking its effect into account rather in the global creation rate. From the modelling viewpoint, we believe that both models can be justified.

Let us discuss this fact in the absence of external signalling. 
If the feedback from the cell velocity affects the creation rate, and the reproduction rate is constant, for a large velocity new protrusions will appear more likely in the direction of motion, while "old" protrusions will continue reproducing, regardless of their orientation. Since the cell velocity induces a non-local effect, the model is relevant in the sense that it brings together same scale dynamics. \par 
In the model we study now, an isotropic creation dynamics is constantly at play for either slowing down or introducing new directions for the motion. This is more relevant in the spirit of the probing of the environment.

It is possible, using similar assumptions, to derive the same types of mathematical properties for both models. Their comparison will be the object of a future work.

\subsection{A reproduction rate for migration under chemical signalling}
In the present work, we want to investigate the effect of chemical signalling on the cell dynamics. The simplest framework consists in studying the effect of a homogeneous gradient of attractive signal in the medium. 

Chemical signals in the environment are detected by protrusions\citep{heckman2013filopodia}. The sensing corresponds to specific chemical reactions at the tip of filopodia, that propagates via molecular signalling loops towards the base of the protrusion. Therefore, the local dynamics reacts to the signal. In our modelling approach, we introduce a bias in the reproduction rate according to the direction of the gradient of signal.

\subsubsection{Reproduction rate without signal}
We have seen that in order to take into account cell polarisation, the individual reproduction rate must contain a positive correlation to $\alpha \vec{V}_t$. \par 
For cell migration without any external signalling, we used a reproduction rate proportional to the probability density of a circular normal distribution centered in $\theta_t$ the direction of motion, and getting sharper with an increasing velocity module $v_t$. More precisely, we had $r(\theta_i;\nu_t) = r^* f(\theta_i; \theta_t,\kappa(v_t))$ with 
\begin{displaymath}
f(\theta_i; \theta_t,\kappa(v_t)) = \frac{1}{2\pi I_0(\kappa(v_t))} \exp(\kappa(v_t) \cos(\theta_i - \theta_t))\, ,
\end{displaymath}
with $\kappa (v_t) = \beta \tanh(\alpha v_t) \geq 0$ quantifying the concentration in the direction of motion, and $I_0$ the $0$-order modified Bessel function of the first kind. The mutation law $g(\cdot;\theta_i)$ has also a circular normal distribution centered in $\theta_i$ and with a constant shape parameter (resp. variance) $K$ (resp. $\sigma^2$). \par 

In the present work, adding the effect of a homogeneous gradient of attractive signal amounts to introducing another direction favored by the reproduction rate. 

\subsubsection{Gradient of signal}
In this work, we consider that the cell is reacting to a constant gradient in concentration of chemical signals, that induces a constant bias in the protrusive activity. More precisely, we assume that the signal interferes with the direction favored by the reproduction rate. For that purpose, we consider a linear interpolation of two circular normal distributions, one being centered in the (variable) direction of motion, and the other in the direction of the gradient $\theta_g$. We write

\begin{equation}
r(t,\theta_i;\nu_t) = r^* \left( \left[1-\frac{\kappa_2(t)}{2\beta}\right] f(\theta_i; \theta_t,\kappa(v_t)) + \frac{\kappa_2(t)}{2\beta} f(\theta_i; \theta_g,\kappa_2) \right) \,,
\end{equation} 
with $\theta_g \in [0,2\pi)$ the direction of the gradient, and $\kappa_2$ a positive function of time, bounded from above by $2\beta$, quantifying the sensitivity to the chemical signal. Note that in this configuration, if there is no signal ($\kappa_2 \equiv 0$), then the reproduction rate is equal to the classical choice in the trajectory model. If the sensitivity to the signal rises, there is a balance between the cell's persistence and the chemical sensing. \par

The following figure \ref{fig:ReproductionRate} shows the reproduction rate's function for $\theta_t=\frac{\pi}{2}$, and $\theta_g=0$. We take $\kappa(v_t) \equiv k_1$ given below and choose values of $\kappa_2$ around it. We can see the change of balance from the internal polarisation to the signal sensing, in both the low and high polarised case. 

\begin{figure}[H]
\centering
\includegraphics[scale=0.3]{./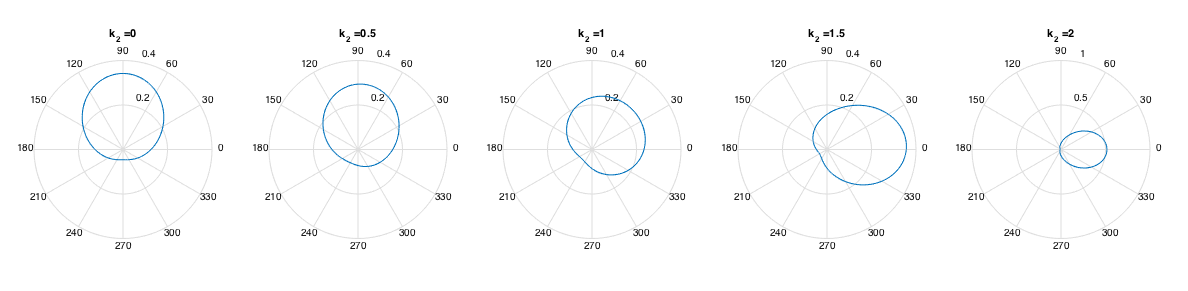}\\
\includegraphics[scale=0.3]{./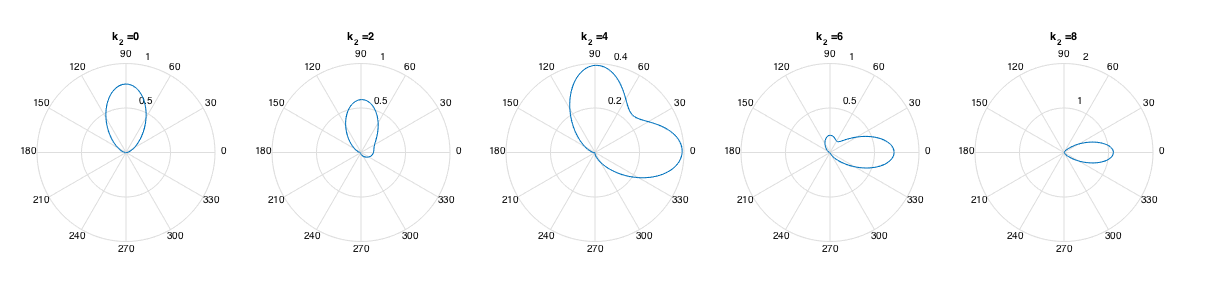}
\caption{Up : $\kappa_1 = 1$. Down : $\kappa_1 = 4$. }\label{fig:ReproductionRate}
\end{figure}

\section{Mathematical properties}
From now on, we will use the notation $C$ for any constant, that will change from line to line. We introduce here a stochastic differential equation for $(\nu_t)_t$ driven by Point Poisson Measures. We will show existence and uniqueness of a solution, and prove that it follows the dynamics previously described. 

In order to pick a specific individual in the population, we have to be able to order them or their trait. Indeed, from the n-uplet $(\theta_1,...,\theta_N)$, one can recover $\nu = \sum_{i=1}^N \delta_{\theta_i}$, but from $\nu$ it is only possible to know $\{\theta_1,...,\theta_N\}$. We do so following the notation used in \cite{fournier_microscopic_2004}.

\begin{defi}\label{notation}
Let us define the function 
\begin{displaymath}
\begin{array}{lccl}
H=(H^{1},...,H^{k},...) : & \mathcal{M} & \longrightarrow & \left(\chi\right)^{\mathbb{N}^{*}} \\
&&&\\
    &\displaystyle \nu = \sum_{i=1}^{n} \delta_{\theta_{i}} & \longmapsto & (\theta_{\sigma(1)},...,\theta_{\sigma(n)},... ), 
\end{array}
\end{displaymath}
with $\theta_{\sigma(1)} \preceq \theta_{\sigma(2)}\preceq ... \preceq \theta_{\sigma(n)}$, for an arbitrary order $\preceq$. 
Now, an individual can be picked by its label $i$, and the corresponding trait writes $H^{i}(\nu) = \theta_{\sigma(i)}$.
\end{defi}

Let $(\Omega,\mathcal{F},\mathbb{P})$ be a probability space, and $n(\dx i)$ the counting measure on $\mathbb{N}^*$. We introduce the following objects: 

\begin{itemize}\label{objets_proba}
\itemb $\nu_0\in \mathcal{M}$ the finite point measure describing the initial population, eventually equal to the null measure. It can be chosen stochastic as soon as $\mathbb{E}[<\nu_0,1>]<+\infty$.
\itemb $M_0(\dx s,\dx\theta,\dx u)$ a Poisson Point Measure on $[0,+\infty) \times \chi \times \mathbb{R}_+$, of intensity measure $\dx s \,\dx\theta\, \dx u$,
\itemb $M_1(\dx s,\dx i,\dx u)$ and $M_3(\dx s,\dx i,\dx u)$ Poisson Point Measures on $[0,+\infty) \times \mathbb{N}^* \times \mathbb{R}_+$, both of intensity measure $\dx s\, n(\dx i) \,\dx u$,
\itemb $M_2(\dx s,\dx i,\dx\theta,\dx u)$ a Poisson Point Measure on $[0,+\infty) \times \mathbb{N}^* \times \chi \times \mathbb{R}_+$, of intensity measure $\dx s n(\dx i) \dx\theta \dx u$.
\end{itemize}
The Poisson Measures are independent. Finally, $(\mathcal{F}_t)_{t\geq 0}$ denotes the canonical filtration generated by these objects. Let us construct the  $(\mathcal{F}_t)_{t\geq 0}$-adapted process $(\nu_t)_{t\geq 0}$ as the solution of the following SDE: $\forall t\geq 0,$ and writing $N_s = <\nu_s,1>$, we have

\begin{equation}\label{processus}
\begin{array}{lllllll}
\nu_t &=& \nu_0 & &&& \\
 &+&\displaystyle  \int_{0}^t \int_{\chi \times \mathbb{R}_+} &\delta_{\theta} && \displaystyle \mathds{1}_{u \leq c\times (1-\frac{N_s}{\lambda})_+} & M_0(\dx s,\dx\theta,\dx u) \\
&+& \displaystyle \int_{0}^t \int_{\mathbb{N}^* \times \mathbb{R}_+} & \delta_{H^i(\nu_{s})} &\displaystyle \mathds{1}_{i \leq N_{s}} & \mathds{1}_{u \leq (1- \mu)r(s,H^i(\nu_{s}),\nu_s)(1-\frac{N_s}{\lambda})_+ } &  M_1(\dx s,\dx i,\dx u) \\  
&+& \displaystyle \int_{0}^t \int_{\mathbb{N}^* \times \chi \times \mathbb{R}_+} &  \delta_{\theta} & \displaystyle \mathds{1}_{i \leq N_{s}} &\displaystyle \mathds{1}_{u \leq \mu r(s,H^i(\nu_{s}),\nu_s) (1-\frac{N_s}{\lambda})_+ g\left( \theta; H^i(\nu_{s})\right))} & M_2(\dx s,\dx i,\dx\theta,\dx u) \\
&-&\displaystyle \int_{0}^t \int_{\mathbb{N}^* \times \mathbb{R}_+} &\delta_{H^i(\nu_{s})} & \displaystyle \mathds{1}_{i \leq N_{s}} & \displaystyle \mathds{1}_{u \leq d} & M_3(\dx s,\dx i,\dx u).
\end{array}
\end{equation}

In this equation, each term describes a different event. The Poisson Point Measures generate atoms homogeneously in time. However, the dynamics we want to describe follows state and time-dependent rates. Hence, we use indicator functions to keep only some of the events in order to get the wanted rates. Then, the Dirac measures correspond to the individuals added to or removed from the population. \par 

In this work, we assume that the reproduction rate is a bounded function: $\exists \overline{r}>0$ such that $\forall \nu \in \mathcal{M}$, $\forall (\theta,\nu) \in \chi \times \mathcal{M}$, $\forall t\in \R_+$, $0\leq r(t,\theta,\nu) \leq \overline{r}$.

\subsection{Existence and uniqueness}
Let us now show existence and uniqueness of a solution for equation (\ref{processus}).
\begin{prop}\label{existence}
Recall that $N_t = <\nu_t,1>$. Assume the boundedness of rates, and that $\mathbb{E}[N_0]<+\infty$.
\begin{enumerate}
\item There exists a solution $\nu \in \mathbb{D}(\mathbb{R}_+,\mathcal{M}(\chi))$ of equation (\ref{processus}) such that
\begin{equation}\label{controlNt}
\forall T>0,\; \mathbb{E}\left[\sup_{t\in [0,T]} N_t\right]<\mathbb{E}\left[ N_0\right]e^{\overline{r}T} + \frac{c}{\overline{r}}(e^{\overline{r}T}-1)<+\infty\,. 
\end{equation}
\item There is strong (pathwise) uniqueness of the solution. 
\end{enumerate}
\end{prop}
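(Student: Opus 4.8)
The plan is to follow the now-standard strategy for measure-valued jump processes driven by Poisson point measures, as in \cite{fournier_microscopic_2004}: first establish the a priori moment bound, then build the solution by localisation while ruling out explosion, and finally obtain pathwise uniqueness. The boundedness assumption $0 \le r \le \overline{r}$ will be used throughout to keep the total jump rate linear in $N_t$.

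First I would derive \eqref{controlNt}. Testing \eqref{processus} against $f \equiv 1$, the death term can only decrease $N_t$, while $N_t$ increases by one at each creation or reproduction atom (clone or mutation). Since the mutation kernel $g(\cdot;\theta)$ integrates to one, the instantaneous total birth rate at time $s$ is bounded by $c + \sum_{i \le N_s} r(s,H^i(\nu_s),\nu_s) \le c + \overline{r} N_s$. Hence $\sup_{s \le t} N_s \le N_0 + (\text{number of birth atoms on } [0,t])$, and after taking expectations and using the intensities of the Poisson measures together with Fubini, the function $\phi(t) = \mathbb{E}[\sup_{s \le t} N_s]$ satisfies $\phi(t) \le \mathbb{E}[N_0] + \int_0^t (c + \overline{r}\,\phi(s))\,\dx s$. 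Gronwall's lemma then yields exactly the claimed bound. To make this rigorous one argues on the process stopped at $\tau_n = \inf\{t : N_t \ge n\}$, where all quantities are integrable, and lets $n \to \infty$ by monotone convergence.

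For existence, I would construct the solution on each interval $[0,\tau_n)$. There, only atoms of $M_0$ and of $M_1,M_2,M_3$ with label $i \le n$ are active, and these restricted measures carry finite intensity on $[0,T] \times \{i \le n\} \times \cdots$. The process is therefore a pure-jump process with bounded total rate, uniquely defined by waiting for the next relevant atom and applying the corresponding Dirac update, and the resulting processes are consistent in $n$. Writing $\tau_\infty = \lim_n \tau_n$, non-explosion follows from the moment bound: since $N_{\tau_n} = n$ on $\{\tau_n \le T\}$, Markov's inequality gives $n\,\mathbb{P}(\tau_n \le T) \le \mathbb{E}[\sup_{s \le T \wedge \tau_n} N_s] \le C(T)$, so $\mathbb{P}(\tau_n \le T) \to 0$ and $\tau_n \uparrow \infty$ almost surely. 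This defines a global solution in $\mathbb{D}(\mathbb{R}_+,\mathcal{M}(\chi))$ satisfying \eqref{controlNt}. For pathwise uniqueness, I would take two solutions $\nu^1,\nu^2$ with the same initial condition driven by the same Poisson measures: on $[0,\tau_n)$ both are obtained from the same deterministic recursion over the common atoms, so an induction on successive jump times shows they coincide; alternatively one controls $\mathbb{E}[\sup_{s \le t \wedge \tau_n} |<\nu^1_s - \nu^2_s, f>|]$ by a Gronwall inequality, using that the rates depend on $\nu$ only through the bounded observables $<\nu,\cos>$ and $<\nu,\sin>$. Letting $n \to \infty$ gives uniqueness on $\mathbb{R}_+$.

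The step requiring the most care, and the genuine obstacle, is the feedback of $\nu_t$ on its own dynamics: the reproduction rate depends on $\nu_t$ through the velocity $\vec{V}_t$, so the jump intensities are truly state-dependent and nonlinear. The boundedness assumption is precisely what tames this, keeping the total rate linear in $N_t$ so that both the Gronwall estimate and the non-explosion argument close. For the uniqueness step one must also verify the Lipschitz dependence of $r$ on $\nu$: this follows since $r$ is smooth in $v_t$ and $\theta_t$, and since $<\cdot,\cos>,<\cdot,\sin>$ are Lipschitz in the trait configuration, with the apparent degeneracy of $\theta_t = \arg(\vec{V}_t)$ at $v_t = 0$ being harmless because $\kappa(0)=0$ renders $f(\cdot;\theta_t,0)$ uniform and hence independent of $\theta_t$.
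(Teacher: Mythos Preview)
Your proposal is correct and follows essentially the same strategy as the paper, which explicitly invokes \cite{fournier_microscopic_2004}: build the process step-by-step on the interval before explosion, prove the moment bound \eqref{controlNt} via stopping times $\tau_n$ and Gronwall, rule out explosion by Markov's inequality, and obtain pathwise uniqueness from the fact that the jump sequence is a deterministic function of $(\nu_0,M_0,M_1,M_2,M_3)$. Your closing paragraph on Lipschitz dependence of $r$ in $\nu$ is extra relative to the paper---uniqueness here follows already from the jump-by-jump recursion, so the Lipschitz/Gronwall route is a valid alternative but not required.
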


\begin{proof}[Proof of \ref{existence}]
The proof is similar to prop.2.2.5 and 2.2.6 in \cite{fournier_microscopic_2004}.

\begin{enumerate}
\item Let $T_0=0$, and $t\in \mathbb{R}_+$. Then, the global jump rate of $\nu_t$ is smaller than $c + (\overline{r}+d)N_t$. Hence one can $\mathbb{P}-a.s$ define the sequence $(T_k)_{k\in\mathbb{N}^*}$ of jumping times, as well as 
$T_\infty := \lim_{k\rightarrow +\infty} T_k$.\par 

Now, by construction, it is $\mathbb{P}-a.s$ possible to build "step-by-step" a solution of equation (\ref{processus}) on $[0,T_\infty[$. Showing existence of a solution $(\nu_t)_{t\in \mathbb{R}_+} \in \mathbb{D}(\mathbb{R}_+,\mathcal{M}(\chi))$ amounts to showing that $\mathbb{P}-a.s$, $T_\infty = +\infty$. That is equivalent to saying that there cannot be an infinite number of jumps in a finite time interval. \par 

\itemb \textbf{\underline{First, we show the control property (\ref{controlNt})}}. For $n>0$ define the sequence of stopping times $(\tau_n)_n$ by

\begin{displaymath}
\tau_n = \inf_{t \geq 0} \{N_t \geq n\}.
\end{displaymath}

\begin{itemize}
\item[$\rightarrow$] \boldmath \textbf{Let us show that $(\tau_n)_{n\geq 0} $  is a sequence of stopping times for $(\mathcal{F}_t)_t$}. \unboldmath Denote $\sigma_t=\sigma(\nu_s, \; 0\leq s \leq t)$ the $\sigma$-algebra generated by $\{\nu_s,\;0\leq s \leq t\}$. Then $\forall t\geq0$, $\sigma_t \subseteq \mathcal{F}_t$. For $(n,m)\in \left(\mathbb{N}^*\right)^2$, notice that
 
\begin{displaymath}
\begin{aligned}
\{\tau_n \leq m \} &= \{\inf\{t\geq 0 ,\; \left< \nu_t,1\right> \geq n \} \leq m \} \\
& \in \sigma_m \subseteq \mathcal{F}_m,
\end{aligned}
\end{displaymath}
and $(\tau_n)_{n\geq 0} $ is indeed a sequence of stopping times.

\item[$\rightarrow$]\boldmath \textbf{Now, we prove that for all $T<+\infty$, the quantity 
$\mathbb{E}\left[ \sup_{t\in [0,T \wedge \tau_n]} N_t \right]$ is bounded $\forall n\geq 0$.} \unboldmath \par

For $t\in\mathbb{R}_+$, using equation (\ref{processus}) and dropping the non-positive term, one has

\begin{eqnarray*}
N_{t\wedge \tau_n} &=& <\nu_{t\wedge \tau_n},1> \leq  N_0 + \int_{0}^{t\wedge \tau_n} \int_{\chi \times \mathbb{R}_+} \mathds{1}_{u \leq c\times (1-\frac{N_s}{\lambda})_+}  M_0(\dx s,\dx \theta,\dx u) \\
&+& \int_{0}^{t\wedge \tau_n} \int_{\mathbb{N}^* \times \mathbb{R}_+}  \mathds{1}_{i \leq N_{s}} \mathds{1}_{u \leq (1- \mu)r(s,H^i(\nu_{s}),\nu_s)(1-\frac{N_s}{\lambda})_+} M_1(\dx s,\dx i,\dx u) \\  
&+& \int_{0}^{t\wedge \tau_n}\int_{\mathbb{N}^* \times \chi \times \mathbb{R}_+}  \mathds{1}_{i \leq N_{s}}  \mathds{1}_{u \leq \mu r(s,H^i(\nu_{s}),\nu_s)(1-\frac{N_s}{\lambda})_+  g( \theta; H^i(\nu_{s}))} M_2(\dx s,\dx i,\dx \theta,\dx u).
\end{eqnarray*}

As each integrand is positive, bounded, and integrable with respect to the intensity measure, taking the expectation and using the Fubini theorem, we can write

\begin{eqnarray}\label{inequality1}
\mathbb{E}\left[\sup_{t\in [0,T\wedge \tau_N]}N_t\right] &\leq & \mathbb{E}[N_0] + \E\left[\int_{0}^{T\wedge \tau_N} \left( c + \sum_{i=1}^{N_t} r(t,\theta_i,\nu_t) \right)\dx t \right]  \nonumber\\
&\leq & \mathbb{E}[N_0] + cT + \overline{r} \int_{0}^{T} \E\left[\sup_{s\in [0,t\wedge \tau_N]}N_s\right] \dx t \nonumber\\
\end{eqnarray}
leading to the $T$-dependent bound using the Gronwall inequality. 

\item[$\rightarrow$] \boldmath \textbf{ Let us prove that $\mathbb{P}-a.s,\; \lim_{n\rightarrow +\infty} \tau_n = +\infty$.} \unboldmath If this wasn't the case, there would exist $M<+\infty$ and a set $A_M \subset \Omega$ such that $\mathbb{P}(A_M) >0$, and $\forall \omega \in A_M, \; \lim_{n\rightarrow +\infty} \tau_n(\omega) < M$. By the Markov inequality, $\forall T > M,$

\begin{displaymath}
\E\left[\sup_{t\in [0,T\wedge \tau_n]}N_t\right] \geq n \underbrace{\mathbb{P}\left(\sup_{t\in [0,T\wedge \tau_n]}N_t \geq n\right)}_{\geq \mathbb{P}(A_M) >0},
\end{displaymath}
which is in contradiction with equation \eqref{inequality1}. 

\item[$\rightarrow$] \textbf{Property (\ref{controlNt}) is proved by the Fatou lemma:} 
\begin{equation*}
\begin{aligned}
\E\left[ \sup_{t\in [0,T]}N_t \right] &= \E\left[\liminf_{n\rightarrow +\infty} \sup_{t\in [0,T\wedge \tau_n]}N_t \right] \\
&\leq  \liminf_{n\rightarrow +\infty} \E\left[ \sup_{t\in [0,T\wedge \tau_n]}N_t \right] \leq \mathbb{E}[N_0] e^{\overline{r}T} + \frac{c}{\overline{r}}\left( e^{\overline{r}T} -1\right) <+\infty.
\end{aligned}
\end{equation*}

\itemb \textbf{\underline{\boldmath Now, let us show that $\mathbb{P}-a.s$, $T_\infty = +\infty$. \unboldmath}} If this is not the case, then there exists $\overline{M} <+\infty$ and a set $A_{\overline{M}} \subset \Omega$ such that $\mathbb{P}(A_{\overline{M}}) >0$ and $\forall w \in A_{\overline{M}}$, $T_\infty(\omega) < \overline{M}$. Moreover, if the assertion

\begin{equation}\label{assertion1}
\forall \omega \in A_{\overline{M}},\; \lim_{k\rightarrow +\infty} N_{T_k}(\omega) = +\infty,
\end{equation}

is true, then we would have 

\begin{displaymath}
\forall N>0, \; \forall \omega \in A_{\overline{M}},\; \tau_N(\omega) \leq \overline{M},
\end{displaymath}

which contradicts $\lim_{n\rightarrow +\infty} \tau_n = +\infty$. As a consequence, if we prove (\ref{assertion1}), the proposition is proved. 
If (\ref{assertion1}) is not true, there would exist $N'>0$ and a set $B\subset A_{\overline{M}}$ such that $\mathbb{P}(B)>0$ and 

\begin{displaymath}
\forall \omega \in B, \; \forall k \in \mathbb{N},\; N_{T_k}(\omega) < N'.
\end{displaymath}

Then, $\forall \omega \in B$, $(T_k(\omega))_k$ can be seen as the subsequence of a sequence of jumping times $(T_k^1(\omega))_k$ of a Point Poisson Process of intensity $c + (\overline{r}+d)N'$. The only accumulation point of $(T_k^1(\omega))_k$ being $\mathbb{P}-a.s \; +\infty$, it contradicts the definition of $B$, and proves (\ref{assertion1}). 
\end{itemize}

\item The sequence of jumping times $(T_k)_{k\in \mathbb{N}}$ being already defined, we only have to show that $(T_k,\nu_{T_k})_{k\in \mathbb{N}}$ are uniquely determined by $D=(\nu_0,M_0,M_1,M_2,M_3)$ defined above.
But this is clear by construction of the process. 
\end{enumerate}
\end{proof}

\subsection{Markov property}

Now, we can show that the solution $(\nu_t)_t$ of equation (\ref{processus}) is a Markov process  in the Skorohod space $\mathbb{D}(\R_+,\mathcal{M}_F(\chi))$ of càdlàg finite measure-valued processes on $\chi$. For that purpose, we introduce $\forall \nu\in \mathcal{M}$, $\Phi: \mathcal{M} \rightarrow \mathbb{R}$ measurable and bounded, and $t\geq 0$, the operator $L_t$ defined by 

\begin{eqnarray}\label{generator}
L_t\Phi(\nu) &=& \int_{\chi} c\times (1-\frac{N_t}{\lambda})_+ \left[ \Phi(\nu + \delta_{\theta}) - \Phi(\nu) \right] \dx \theta \nonumber\\
&+& \int_{\chi} (1- \mu) r(t,\theta, \nu)(1-\frac{N_t}{\lambda})_+  \left[ \Phi(\nu + \delta_{\theta}) - \Phi(\nu) \right] \nu(\dx \theta) \\
&+& \int_{\chi} \mu r(t,\theta,\nu)(1-\frac{N_t}{\lambda})_+  \int_{\chi} \left[ \Phi(\nu + \delta_{z}) - \Phi(\nu) \right] g(z; \theta) \dx z \; \nu(\dx \theta)\nonumber\\
&+& \int_{\chi} d \left[\Phi(\nu - \delta_{\theta}) - \Phi(\nu) \right] \nu(\dx \theta). \nonumber 
\end{eqnarray}

\begin{prop}\label{markovprop}
 Take $(\nu_t)_{t\geq 0}$ the solution of equation (\ref{processus}) with $\mathbb{E}[<\nu_0,1>]<+\infty$. Then, $(\nu_t)_{t\geq 0}$ is a Markovian process of infinitesimal generator $L$.
\end{prop}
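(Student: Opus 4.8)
The plan is to prove the two parts of the statement separately. The Markov property will follow from the strong (pathwise) uniqueness established in Proposition~\ref{existence} combined with the independence of the increments of the driving Poisson measures, while the identification of the generator will come from writing the Dynkin-type semimartingale decomposition of $\Phi(\nu_t)$ and reading off its compensator. Since the rate $r$ carries an explicit time dependence, the process is in fact time-inhomogeneous, so I will establish the Markov property with respect to the family $(L_t)_t$ of~(\ref{generator}).

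For the Markov property I would exploit the flow structure of equation~(\ref{processus}). Fix $s\geq 0$. For each deterministic initial datum $\mu\in\mathcal{M}$, strong uniqueness guarantees a unique càdlàg solution on $[s,+\infty)$ of the SDE driven by the restrictions of $M_0,\dots,M_3$ to $(s,+\infty)$ and started from $\mu$ at time $s$; by the step-by-step construction of the existence proof this solution, call it $\Psi_{s,t}(\mu,\cdot)$, is a measurable function of $\mu$ and of the restricted driving measures. By pathwise uniqueness, $\nu_t=\Psi_{s,t}(\nu_s,\cdot)$ for all $t\geq s$, $\mathbb{P}$-a.s. Now the restricted Poisson measures are independent of $\mathcal{F}_s$ (Poisson measures have independent increments over disjoint time intervals), whereas $\nu_s$ is $\mathcal{F}_s$-measurable. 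Hence, for bounded measurable $\Phi$ and $t\geq s$, the conditional law of $\nu_t$ given $\mathcal{F}_s$ is that of $\Psi_{s,t}(\mu,\cdot)$ taken at $\mu=\nu_s$, so that $\mathbb{E}[\Phi(\nu_t)\mid\mathcal{F}_s]$ is a function of $\nu_s$ alone. This is the (time-inhomogeneous) Markov property.

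To identify the generator I would apply the jump-by-jump change of variables to $\Phi(\nu_t)$: since $\nu$ evolves only by the additions and removals of Diracs dictated by the four integrals of~(\ref{processus}), one gets
\[
\Phi(\nu_t)=\Phi(\nu_0)+\int_0^t\!\!\int \big[\Phi(\nu_{s^-}+\delta_\theta)-\Phi(\nu_{s^-})\big]\mathds{1}_{u\leq c(1-\frac{N_s}{\lambda})_+}\,M_0(\dx s,\dx\theta,\dx u)+\cdots
\]
together with three analogous terms for $M_1,M_2,M_3$. Replacing each Poisson measure by its intensity produces the compensator: integrating out $u$ turns each indicator $\mathds{1}_{u\leq\mathrm{rate}}$ into the corresponding rate, the counting measure $n(\dx i)$ restricted by $i\leq N_s$ becomes $\int_\chi(\cdots)\nu_s(\dx\theta)$ via the identity $\sum_{i=1}^{N_s}f(H^i(\nu_s))=\langle\nu_s,f\rangle$ from Definition~\ref{notation}, and the remaining $\dx\theta$ (resp. $\dx z$) integration carries the isotropic creation (resp. the mutation kernel $g$). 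One obtains exactly $\Phi(\nu_t)=\Phi(\nu_0)+\int_0^t L_s\Phi(\nu_s)\,\dx s+Z^\Phi_t$, where $Z^\Phi_t$ is the sum of the four compensated integrals and $L_s$ is the operator~(\ref{generator}). Taking $\mathbb{E}[\,\cdot\mid\mathcal{F}_s]$ and combining with the flow identity above yields the Dynkin formula, which characterises $(L_t)_t$ as the generator.

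The main obstacle, and essentially the only point requiring care, is to ensure that $Z^\Phi_t$ is a genuine martingale rather than merely a local one, so that the compensator may legitimately be subtracted and the conditional expectation taken. This is where the moment estimate~(\ref{controlNt}) is essential: $\Phi$ being bounded, each jump changes its value by a bounded amount, while the total jump intensity is dominated by $c+(\overline{r}+d)N_s$; hence the expected number of jumps on $[0,t]$ is at most $ct+(\overline{r}+d)\,\mathbb{E}\big[\int_0^t N_s\,\dx s\big]\leq ct+(\overline{r}+d)\,t\,\mathbb{E}[\sup_{s\leq t}N_s]<+\infty$ by Proposition~\ref{existence}, which makes $Z^\Phi_t$ square-integrable and centred. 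A secondary technical point is the measurability of the flow map $\Psi_{s,t}$, which is obtained exactly as in the step-by-step construction of the existence proof.
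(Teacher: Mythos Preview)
Your proof is correct and follows the same underlying strategy as the paper---apply a jump-by-jump change of variables to $\Phi(\nu_t)$ using the SDE~(\ref{processus}) and read off the compensator---but the technical execution differs in two respects worth noting. First, for the Markov property the paper simply asserts ``markovian by construction'', whereas you spell out the flow argument via pathwise uniqueness and independence of the restricted Poisson measures; your version is more explicit and in particular makes clear why the law of the solution does not depend on the arbitrary order $\preceq$ of Definition~\ref{notation}. Second, for the integrability issue the paper localizes by the stopping times $\tau_N=\inf\{t:N_t\geq N\}$, so that all integrands are deterministically bounded, takes expectations via Fubini, and then differentiates $t\mapsto\mathbb{E}[\Phi(\nu_{t\wedge\tau_N})]$ at $t=0$ to recover $L_0\Phi(\nu_0)$ directly from the definition of the generator; you instead appeal to the moment bound~(\ref{controlNt}) to show that the compensated integral $Z^\Phi$ is a genuine martingale on $[0,T]$, obtaining the full Dynkin decomposition $\Phi(\nu_t)=\Phi(\nu_0)+\int_0^t L_s\Phi(\nu_s)\,\dx s+Z^\Phi_t$. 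The paper's route is marginally more elementary (no martingale argument beyond Fubini), while yours delivers the stronger conclusion that $(\nu_t)_t$ solves the martingale problem for $(L_t)_t$, not merely that the derivative at $t=0$ matches.
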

In particular, this proposition ensures that the law of $(\nu_t)_{t\geq 0}$ is independent of the order $\preceq$ involved in (\ref{notation}).

\begin{proof}[Proof of proposition \ref{markovprop}]
The process $(\nu_t)_{t\geq 0} \in \mathbb{D}(\mathbb{R}_+,\mathcal{M}(\overline{\chi}))$ is markovian by construction. Now, let $N_0 <N< +\infty$, and consider again the stopping time $\tau_N$. Let $\Phi :  \mathcal{M} \rightarrow \mathbb{R}$ be measurable and bounded. For simplicity, we express the infinitesimal generator at time $t=0$. As $\mathbb{P}-a.s$ we can write

\begin{equation}\label{Phi(nu)}
\Phi(\nu_t) = \Phi(\nu_0)  + \sum_{s\leq t} \Phi(\nu_{s^-} + (\nu_{s}-\nu_{s^-}))-\Phi(\nu_{s^-})\,, 
\end{equation}
we have

\begin{eqnarray*}
&& \Phi(\nu_{t\wedge \tau_N}) = \Phi(\nu_{0}) + \int_0^{t\wedge \tau_N} \int_{\chi\times \mathbb{R}_+} \left[ \Phi(\nu_{s^-} + \delta_{\theta}) -\Phi(\nu_{s^-} )  \right] \mathds{1}_{u\leq c(1-\frac{N_{s^-}}{\lambda})_+} M_0(\dx s,\dx \theta,\dx u) \\
&+& \int_0^{t\wedge \tau_N} \int_{\mathbb{N^*}\times \mathbb{R}_+} \left[ \Phi(\nu_{s^-} + \delta_{H^i(\nu_{s^-})}) -\Phi(\nu_{s^-})   \right] \mathds{1}_{i\leq N_{s^-}} \mathds{1}_{u\leq (1-\mu) r(s^-,H^i(\nu_{s^-}),\nu_{s^-})(1-\frac{N_{s^-}}{\lambda})_+} M_1(\dx s,\dx i,\dx u)  \\
&+& \int_0^{t\wedge \tau_N} \int_{\mathbb{N^*}\times  \chi \times \mathbb{R}_+} \left[ \Phi(\nu_{s^-} + \delta_{z}) -\Phi(\nu_{s^-})   \right] \mathds{1}_{i\leq N_{s^-}} \mathds{1}_{u\leq \mu r(s^-,H^i(\nu_{s^-}),\nu_{s^-})(1-\frac{N_{s^-}}{\lambda})_+ g( z; H^i(\nu_{s}))} M_2(\dx s,\dx i,\dx z, \dx u)  \\
&+& \int_0^{t\wedge \tau_N} \int_{\mathbb{N^*} \times \mathbb{R}_+} \left[ \Phi(\nu_{s^-} - \delta_{H^i(\nu_{s^-})}) -\Phi(\nu_{s^-})   \right] \mathds{1}_{i\leq N_{s^-}} \mathds{1}_{u\leq d} M_3(\dx s,\dx i, \dx u)\,.  \\
\end{eqnarray*}

Again, as all integrands are bounded, we can take expectations to get

\begin{eqnarray*}
\mathbb{E}\left[ \Phi(\nu_{t\wedge \tau_N}) \right] &=& \mathbb{E}\left[ \Phi(\nu_{0})  \right] + \mathbb{E}\left[  \int_0^{t\wedge \tau_N} \int_{\chi} \left[ \Phi(\nu_{s^-} + \delta_{\theta}) -\Phi(\nu_{s^-} )  \right] c\times (1-\frac{N_{s^-}}{\lambda})_+ \dx \theta \dx s \right] \\
&+& \mathbb{E}\left[  \int_0^{t\wedge \tau_N} \sum_{i=1}^{N_{s^-}}
 \left[ \Phi(\nu_{s^-} + \delta_{H^i(\nu_{s^-})}) -\Phi(\nu_{s^-})   \right]  (1-\mu) r(s^-,H^i(\nu_{s^-}),\nu_{s^-})(1-\frac{N_{s^-}}{\lambda})_+ \dx s  \right]\\
&+& \mathbb{E}\left[  \int_0^{t\wedge \tau_N} \sum_{i=1}^{N_{s^-}}
\mu r(s^-,H^i(\nu_{s^-}),\nu_{s^-})(1-\frac{N_{s^-}}{\lambda})_+ \int_{\chi } \left[ \Phi(\nu_{s^-} + \delta_{z}) -\Phi(\nu_{s^-})   \right]  g(z; H^i(\nu_{s}))  \dx z \dx s  \right]\\
&+& \mathbb{E}\left[  \int_0^{t\wedge \tau_N}  \sum_{i=1}^{N_{s^-}} \left[ \Phi(\nu_{s^-} - \delta_{H^i(\nu_{s^-})}) -\Phi(\nu_{s^-})   \right] 
d \, \dx s \right], \\
&=:& \mathbb{E}\left[ \Phi(\nu_{0}) \right] + \mathbb{E}\left[ \psi(t\wedge \tau_N,\nu) \right].
\end{eqnarray*}
On the one hand, $\forall t\in [0,T]$, 
\begin{eqnarray*}
\parallel \psi(t\wedge \tau_N,\nu) \parallel_{\infty} &\leq & 2 T \parallel \Phi \parallel_{\infty} c  + 2 T \parallel \Phi \parallel_{\infty} (1- \mu) \overline{r} N + 2 T \parallel \Phi \parallel_{\infty} \mu \overline{r} N + 2 T \parallel \Phi \parallel_{\infty} d N \\
&\leq & C T \parallel \Phi \parallel_{\infty} (c + (\overline{r}+d)N) < +\infty.
\end{eqnarray*}

On the other hand, $t\mapsto \psi(t\wedge \tau_N,\nu)$ is derivable in $t=0$ $\mathbb{P}-a.s$ (as $\nu \in \mathbb{D}(\mathbb{R}_+,\mathcal{M}(\chi))$), and for a given $\nu_0$, we have

\begin{eqnarray*}
\frac{\partial \psi}{\partial t}(0,\nu_0) &=&  \int_{\chi} \left[ \Phi(\nu_{0} + \delta_{\theta}) -\Phi(\nu_{0} )  \right] c\times (1-\frac{N_{0}}{\lambda})_+ \dx \theta \\
&+& \sum_{i=1}^{N_{0}}
 \left[ \Phi(\nu_{0} + \delta_{H^i(\nu_{0})}) -\Phi(\nu_{0})   \right]  (1-\mu) r(0,H^i(\nu_{0}),\nu_{0})(1-\frac{N_{0}}{\lambda})_+   \\
&+&  \sum_{i=1}^{N_{0}} \mu r(0,H^i(\nu_{0}),\nu_{0})(1-\frac{N_{0}}{\lambda})_+
\int_{\chi } \left[ \Phi(\nu_{0} + \delta_{z}) -\Phi(\nu_{0})   \right]  g( z; H^i(\nu_{0}))  \dx z  \\
&+&  \sum_{i=1}^{N_{0}} \left[ \Phi(\nu_{0} - \delta_{H^i(\nu_{0})}) -\Phi(\nu_{0})   \right] d\,.
\end{eqnarray*}

Moreover, $\parallel \frac{\partial \psi}{\partial t}(0,\nu_0) \parallel \leq C \parallel \Phi \parallel_{\infty}(c+ N_0 (\overline{r}+d))$. Now,

\begin{displaymath}
\begin{aligned}
L_0\phi(\nu_0) &:= \left. \frac{\partial \mathbb{E}\left[\phi(\nu_t)\right]}{\partial t}\right\vert_{t=0} \\
&= \int_{\chi} \left[ \Phi(\nu_{0} + \delta_{\theta}) -\Phi(\nu_{0} )  \right] c\times (1-\frac{N_{0}}{\lambda})_+ \dx \theta \\
&+ \sum_{i=1}^{N_{0}}
 \left[ \Phi(\nu_{0} + \delta_{H^i(\nu_{0})}) -\Phi(\nu_{0})   \right]  (1-\mu) r(0,H^i(\nu_{0}),\nu_{0})(1-\frac{N_{0}}{\lambda})_+  \\
 & +  \sum_{i=1}^{N_{0}} \mu r(0,H^i(\nu_{0}),\nu_{0})(1-\frac{N_{0}}{\lambda})_+
\int_{\chi } \left[ \Phi(\nu_{0} + \delta_{z}) -\Phi(\nu_{0})   \right] g( z; H^i(\nu_{0}))  \dx z
\\
& +  \sum_{i=1}^{N_{0}} \left[ \Phi(\nu_{0} - \delta_{H^i(\nu_{0})}) -\Phi(\nu_{0})   \right] d,
\end{aligned}
\end{displaymath}
or equivalently 

\begin{displaymath}
\begin{aligned}
L_0\phi(\nu_0) &= \int_{\chi} \left[ \Phi(\nu_{0} + \delta_{\theta}) -\Phi(\nu_{0})  \right] c\times (1-\frac{N_{0}}{\lambda})_+ \dx \theta \\
&+ \sum_{i=1}^{N_{0}}
 \left[ \Phi(\nu_{0} + \delta_{H^i(\nu_{0})}) -\Phi(\nu_{0})   \right]  r(0,H^i(\nu_{0}),\nu_{0}) (1-\frac{N_{0}}{\lambda})_+ \\
 & +  \sum_{i=1}^{N_{0}} \left[ \Phi(\nu_{0} - \delta_{H^i(\nu_{0})}) -\Phi(\nu_{0})  \right] d\\
 &+ \mu \sum_{i=1}^{N_{0}} r(0,H^i(\nu_{0}),\nu_{0})(1-\frac{N_{0}}{\lambda})_+ \left( \int_{\chi } \Phi(\nu_{0} + \delta_{z}) g( z; H^i(\nu_{0}))  \dx z - \Phi(\nu_{0} + \delta_{H^i(\nu_{0})}) \right) \,.
 \end{aligned}
\end{displaymath}
\end{proof}

\section{Numerical simulations}
\subsection{Simulation method}
Since the process $(\nu_t)_t$ is markovian, it is classical to simulate the occurrence of each event one after the other. We proceed as follows: start with the population measure $\nu_k$ at time $t_k$, for a particle located at $X_{k}$, and a protrusion population size $N_k=<\nu_k,1>$.
\begin{description}
\item[Time of next event.] Since the reproduction rate depends on time, the process $(\nu_t)_t$ is non-homogeneous: the global jumping rate at time $t_k$ writes $$\tau(t_k) =  (c + <\nu_k,r(t_k,\cdot,\nu_k)>)(1-\frac{N_k}{\lambda})_+ + dN_k\,.$$
 In order to get the time of the next event, we use the \textbf{thinning} method \citep{lewis1979simulation}. The idea is the following: we find the time of next event $t_{k+1}$ for a dominating Poisson process of rate $\overline{\tau} = (c + \overline{r}N_k)(1-\frac{N_k}{\lambda})_+ + dN_k$. We have $t_{k+1} = t_k + \Delta t$ with 
\begin{displaymath}
\Delta t \sim Exp(\overline{\tau})\,.
\end{displaymath}
\item[Nature of the event.] What happens at time $t_{k+1}$ is determined as follows:
\begin{itemize}
\item creation of a protrusion occurs with probability $\frac{c\times (1-\frac{N_k}{\lambda})_+}{\overline{\tau}}$. Its orientation is chosen uniformly on $[0,2\pi)$. 
\item reproduction of the protrusion number $i$ occurs with probability $\frac{r(t_{k+1},H^i(\nu_k),\nu_k)(1-\frac{N_k}{\lambda})_+}{\overline{\tau}}$. This is where the time dependance is taken into account. Then,
\begin{itemize}
\item[$\rightarrow$] with probability $(1-\mu)$, the new protrusion has orientation $H^i(\nu_k)$,
\item[$\rightarrow$] with probability $\mu$, its orientation is chosen with the realization of a random variable having a probability density $g(\cdot; H^i(\nu_k),\nu_k)$.
\end{itemize}
\item protrusion number $i$ disappears with probability $\frac{d}{\overline{\tau}}$.
\item with probability $\frac{\overline{\tau} - \tau(t_{k+1})}{\overline{\tau}}$, nothing happens. This rejection event allows to recover the non-homogeneous jumping rate of our process from the dominating homogeneous process. 
\end{itemize} 
The measure $\nu_{k+1}$ is then obtained from $\nu_k$ and the information of the event occuring at time $t_{k+1}$.
\item[Updates] the particle's new position is 
\begin{displaymath}
X_{k+1} = X_k + \Delta t \, V_k\,,
\end{displaymath}
while $V_{k+1}= \frac{1}{\gamma} \begin{pmatrix}
<\nu_{k+1},\cos>\\<\nu_{k+1},\sin>
\end{pmatrix}$.
\end{description}

One only has to start again to get a trajectory over time. 

\subsection{Results}
Let us now present the numerical trajectories we obtained. In all the simulations, we used the following parameters: $T=200$, $\Delta t=10^{-4}$, $c=r=3$, $d=1$, $\gamma=200$, $\mu=0.2$, $\lambda=60$, and $\beta=2$. For the mutation kernel, we use a concentration parameter $K=50$, ensuring that mutation events are not sufficient to lose complete polarisation.\par

\subsubsection{Growing sensitivity to the signal}
In this section, we consider an empty initial protrusion population. We choose $$\alpha\in \{ 0.01,\, 0.1,\, 1,\,10\},$$ and for each configuration we display several trajectories for an increasing sensitivity to the signal $\kappa_2$. 
We also show for $\alpha=0.01$ and $\alpha=10$ the histograms of velocity module and orientation.

Since $\beta=2$ is the maximal sensitivity to the internal persistance, we consider $\kappa_2\in \{0.1,\,1,\,2\}$. In the last case, this means that the signal intensity equally competes with the inner sensitivity. Figure \ref{fig:noCI} shows the corresponding cell trajectories, while figure \ref{fig:noCI_distrib} shows the numerical distributions of the velocity module and orientation for 200 trajectories lasting $T=100$ each, with $\dx t = 10^{-3}$.


We make several observations. First, as in previous works, we can see that when $\alpha$ increases, trajectories are more persistent and explore a larger territory. This can also be seen in the velocity module diagram, where larger values are attained for growing $\alpha$.\par 
 The case $\kappa_2=0.1$ corresponds to a very low sensitivity to the signal, so that (blue) trajectories are not biased in that direction. The upper pannel in the velocity diagram in figure \ref{fig:noCI_distrib} also shows that the velocity orientations are not biased.\par 
 The case of an intermediate sensitivity ($\kappa_2=1$, red trajectories and middle panel in figure \ref{fig:noCI_distrib}) shows a bifurcation when $\alpha$ grows: the sensitivity to self polarisation is larger, so that for $\alpha=0.01$, the isotropic distribution wins over the asymmetry induced by the signal: the cell does not move in the direction of the gradient. However, when the self-polarisation rises, the reproduction events get more concentrated, and a preferred direction appears. Then, even if other directions than the one of the gradient may be followed, the constant bias towards the right has a visible effect on the migration. \par 
Finally, for cells as sensitive to the signal than to themselves (yellow trajectories and lower panel), the preferred orientation is always clear, correlating to very persistent trajectories. 

\begin{figure}[H]
	\centering
	\subfloat[$\alpha=0.01$]{\includegraphics[scale=0.15]{./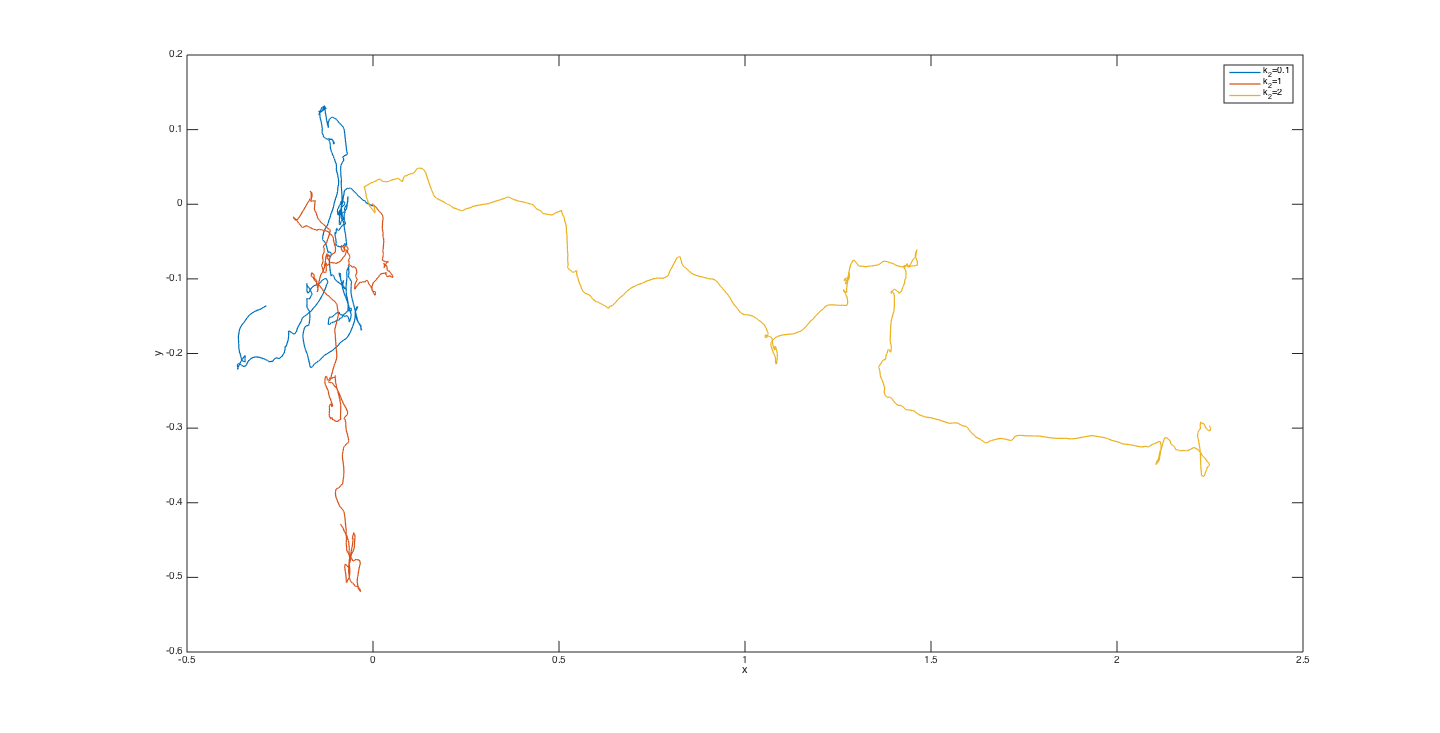}}\quad
	\subfloat[$\alpha=0.1$]{\includegraphics[scale=0.15]{./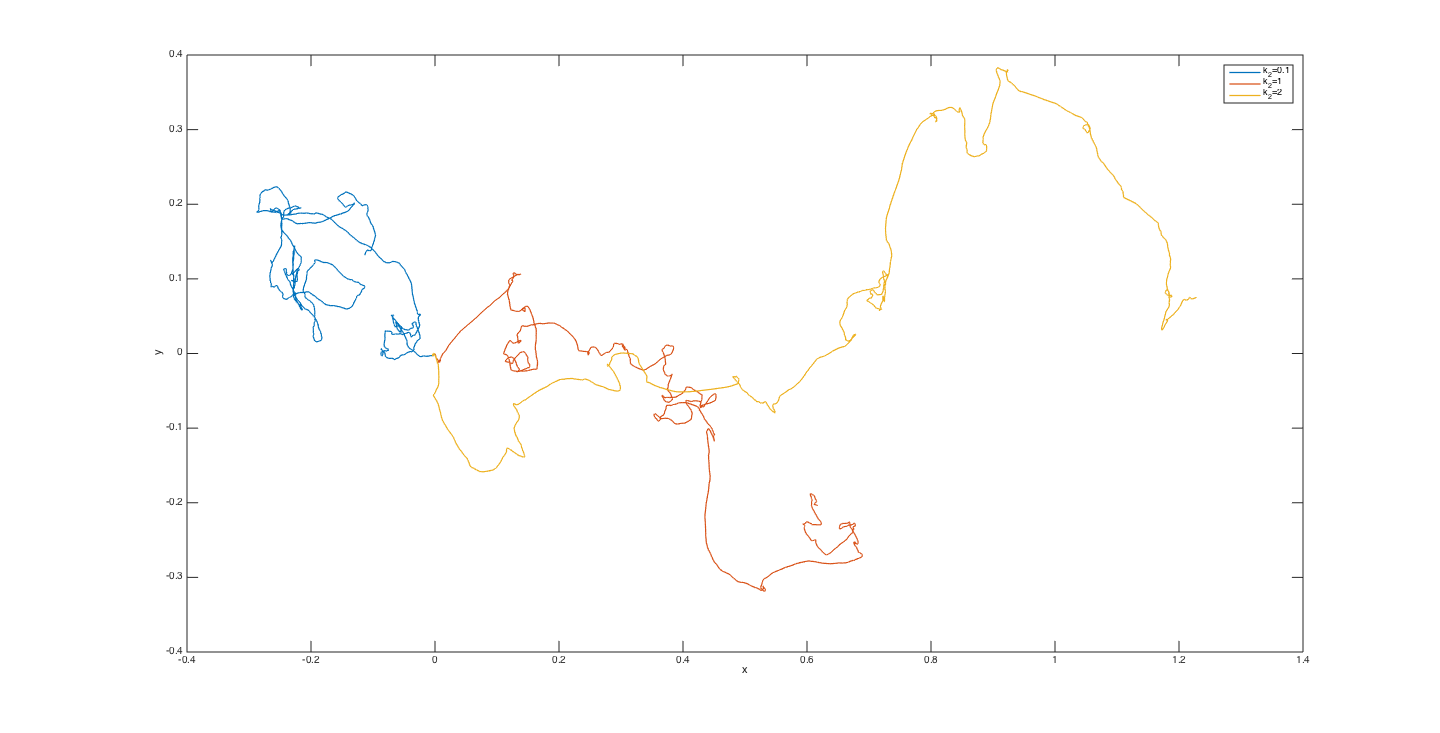}}\\	
		\subfloat[$\alpha=1$]{\includegraphics[scale=0.15]{./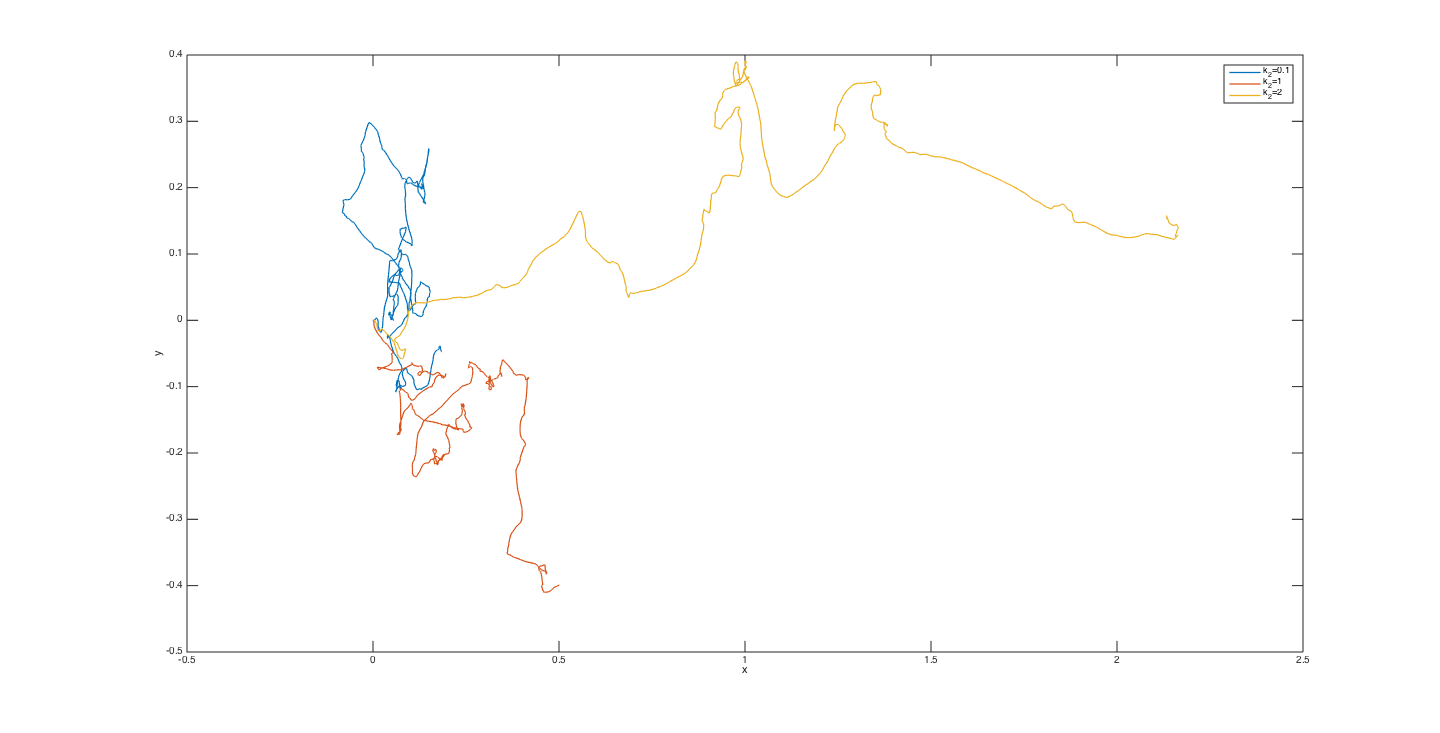}}\quad
	\subfloat[$\alpha=10$]{\includegraphics[scale=0.15]{./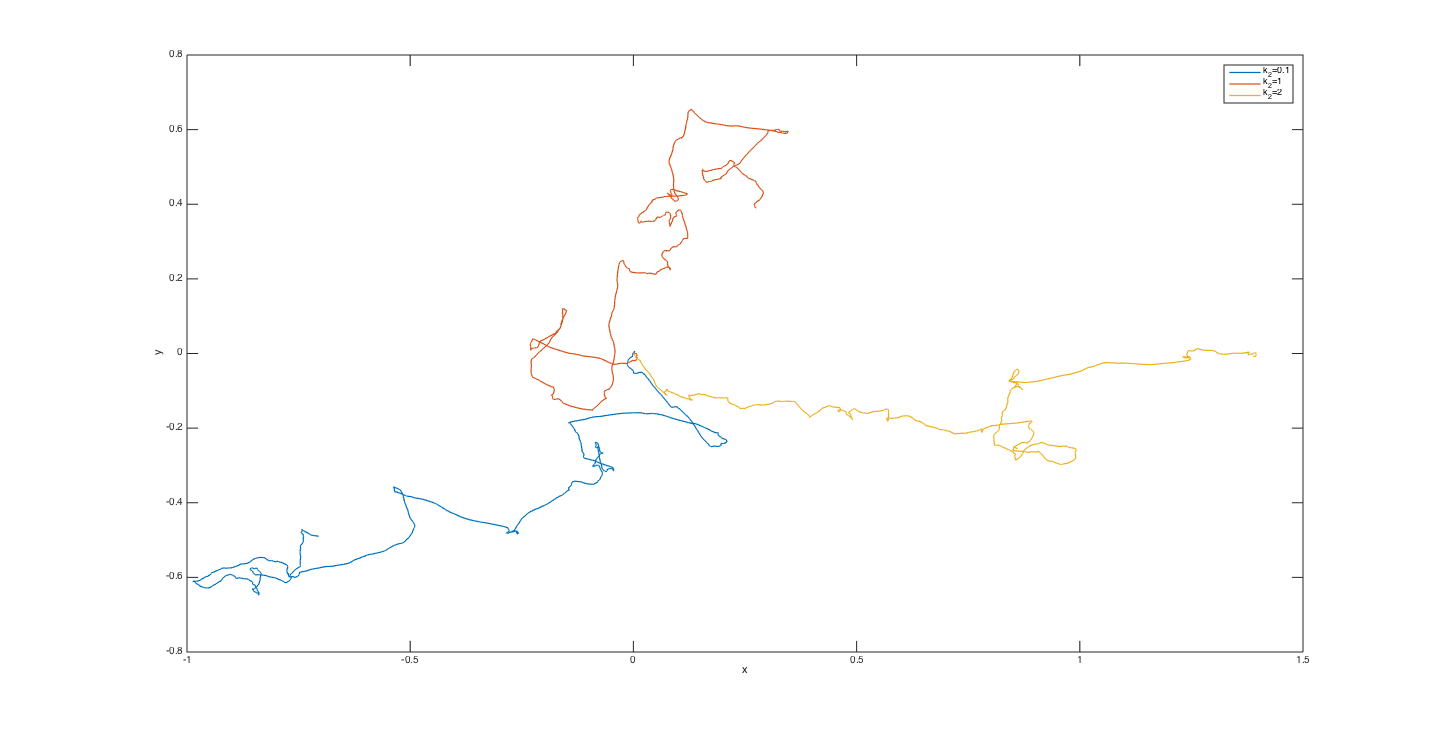}}\\	
	\caption{Numerical trajectories for empty initial protrusion population and several values for $\alpha$ and $\kappa_2$.}\label{fig:noCI}
\end{figure}

These simulations illustrate the balance between the inner and outer sensing of the cell, and shows the variety of behaviours arising from it. In the following, we investigate the role of the initial condition on the cell behaviour.

\begin{figure}[H]
\centering
\subfloat[$\alpha=0.01$]{\includegraphics[scale=0.5]{./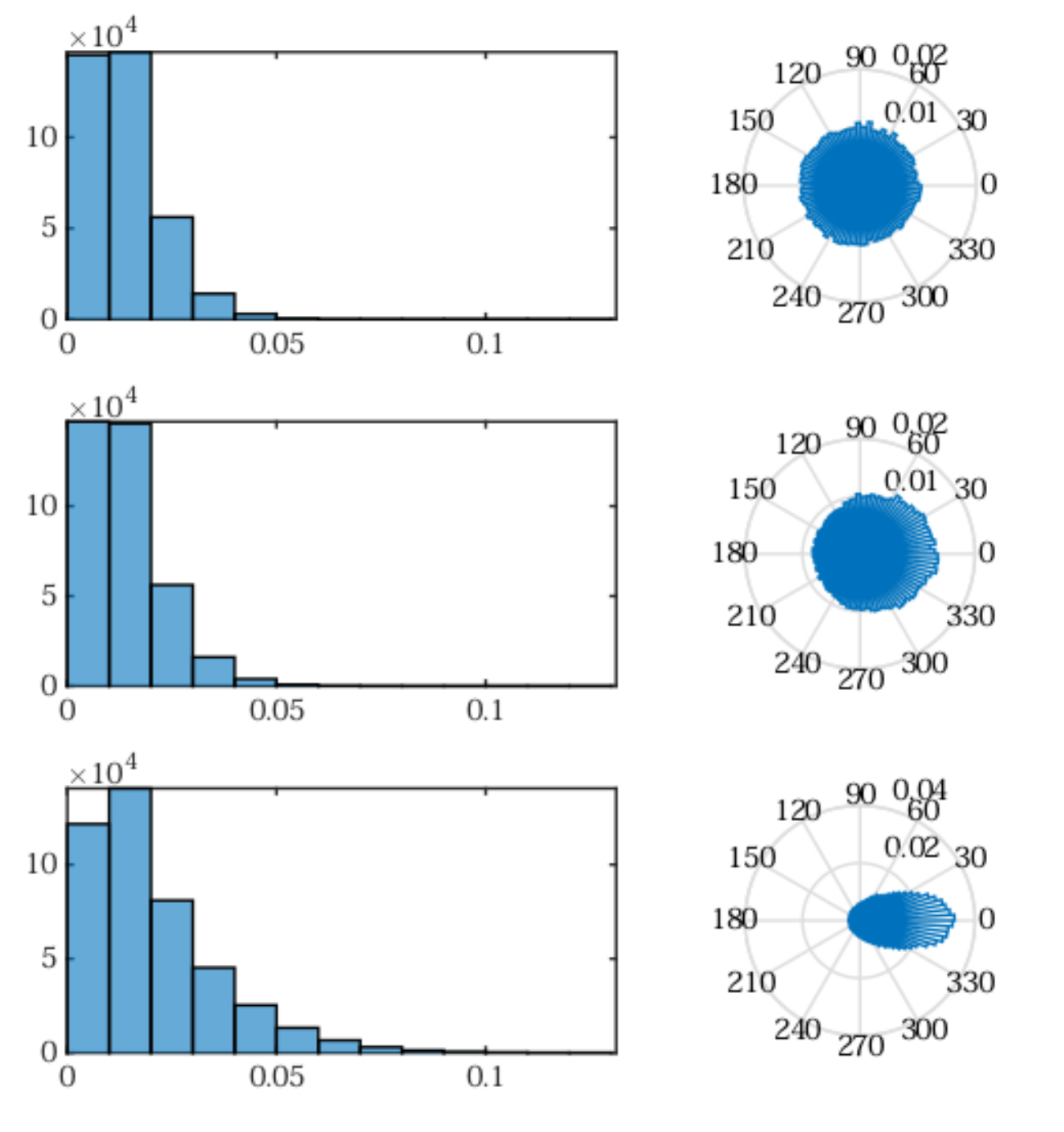}}\quad
\subfloat[$\alpha=10$]{\includegraphics[scale=0.5]{./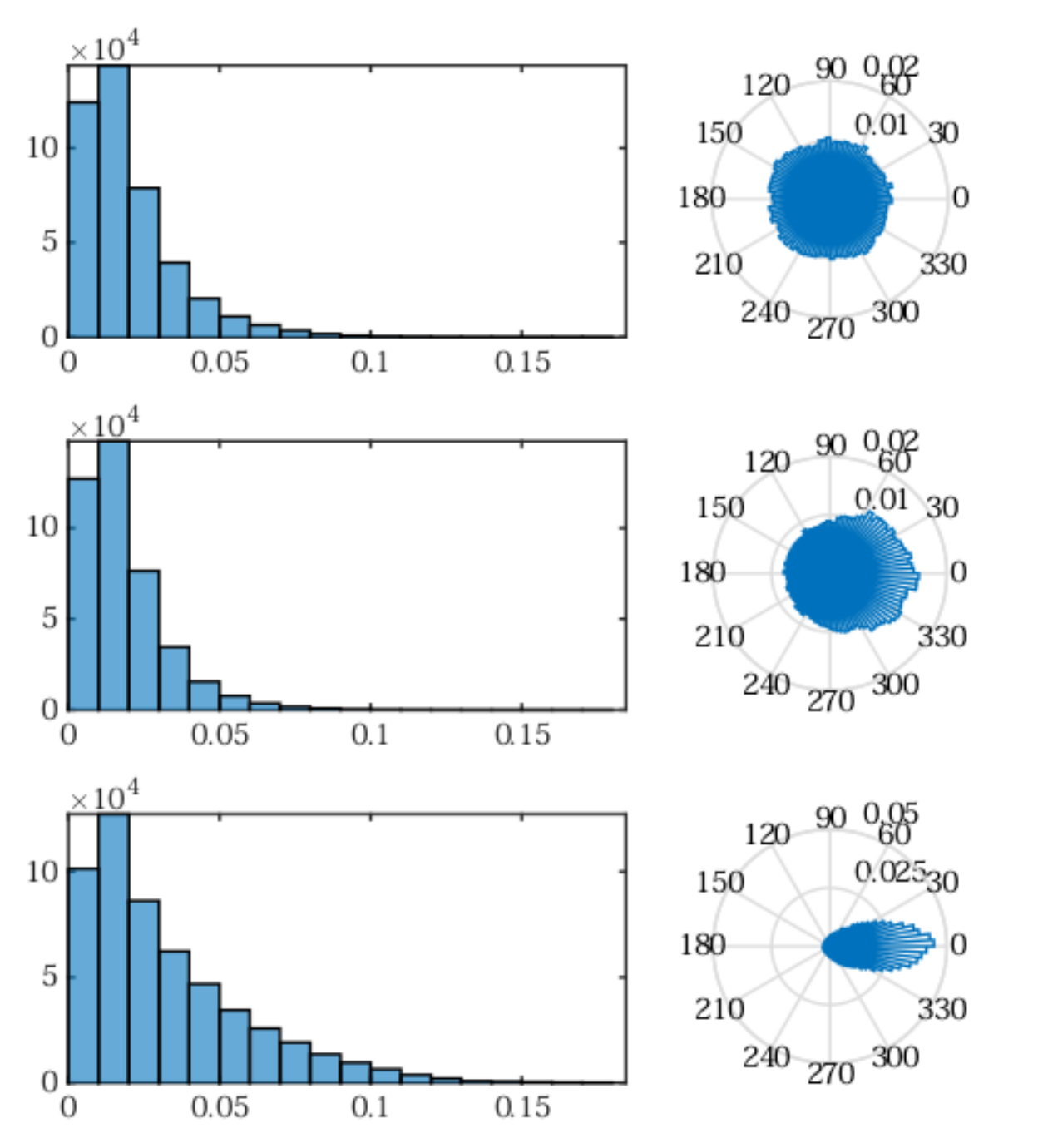}}
\caption{Distributions of the velocity module and orientation for $\alpha=0.01$ (left) and $\alpha=10$ (right) obtained for 200 realizations in each case. Up: $\kappa_2=0.1$, Middle: $\kappa_2=1$, Down: $\kappa_2=2$. Parameters: $T=100$, $\dx t = 10^{-3}$.}\label{fig:noCI_distrib}
\end{figure}

\subsubsection{Inverted initial polarisation}
We perform now the same numerical simulations from a polarised initial condition against the direction of the gradient of signal. More precisely, the initial population of protrusions is composed of $5$ protrusions oriented in the $\pi$ direction. The corresponding trajectories are displayed in figure \ref{fig:CI}, and histograms of velocity modules and orientations for $\alpha\in \{ 0.01,\,10\}$ are displayed in figure \ref{fig:CI_distrib}.

\begin{figure}[H]
	\centering
	\subfloat[$\alpha=0.01$]{\includegraphics[scale=0.15]{./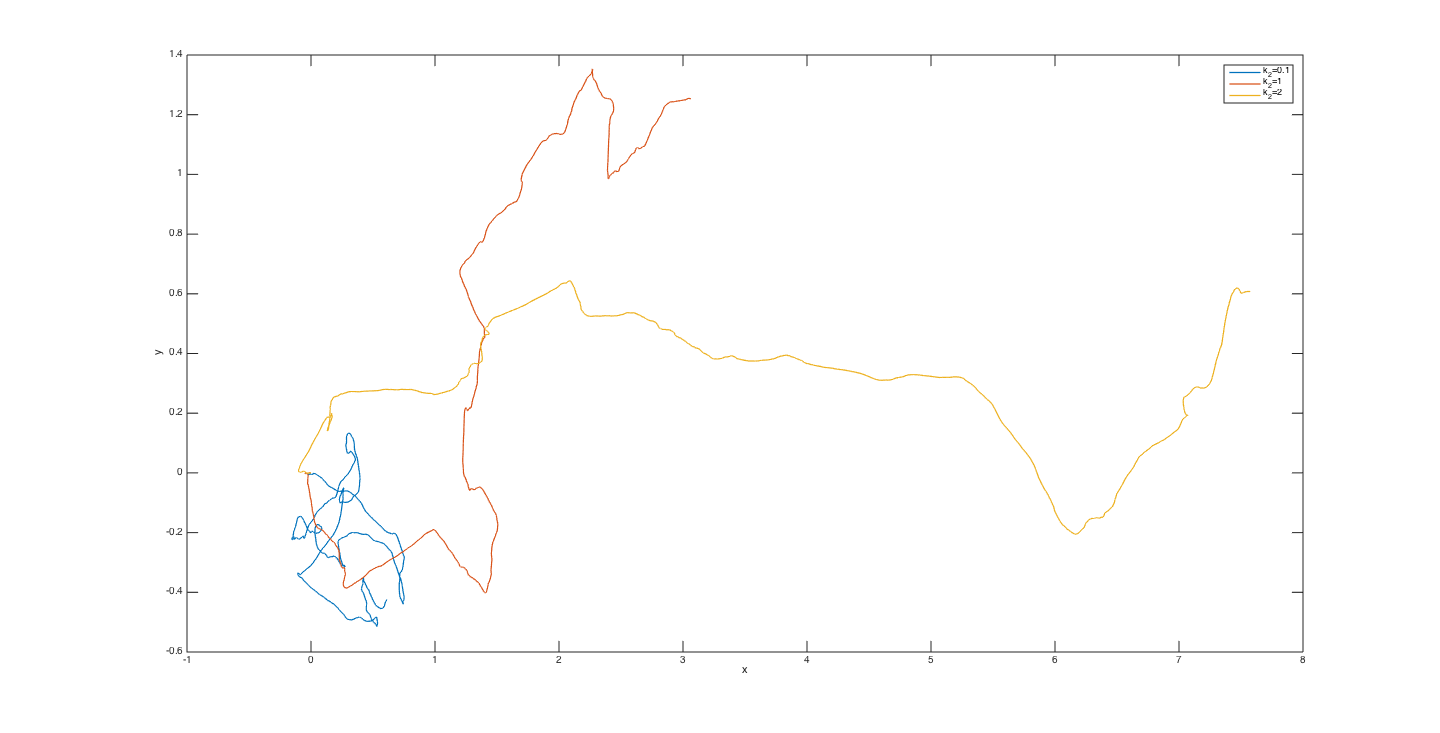}}\quad
	\subfloat[$\alpha=0.1$]{\includegraphics[scale=0.15]{./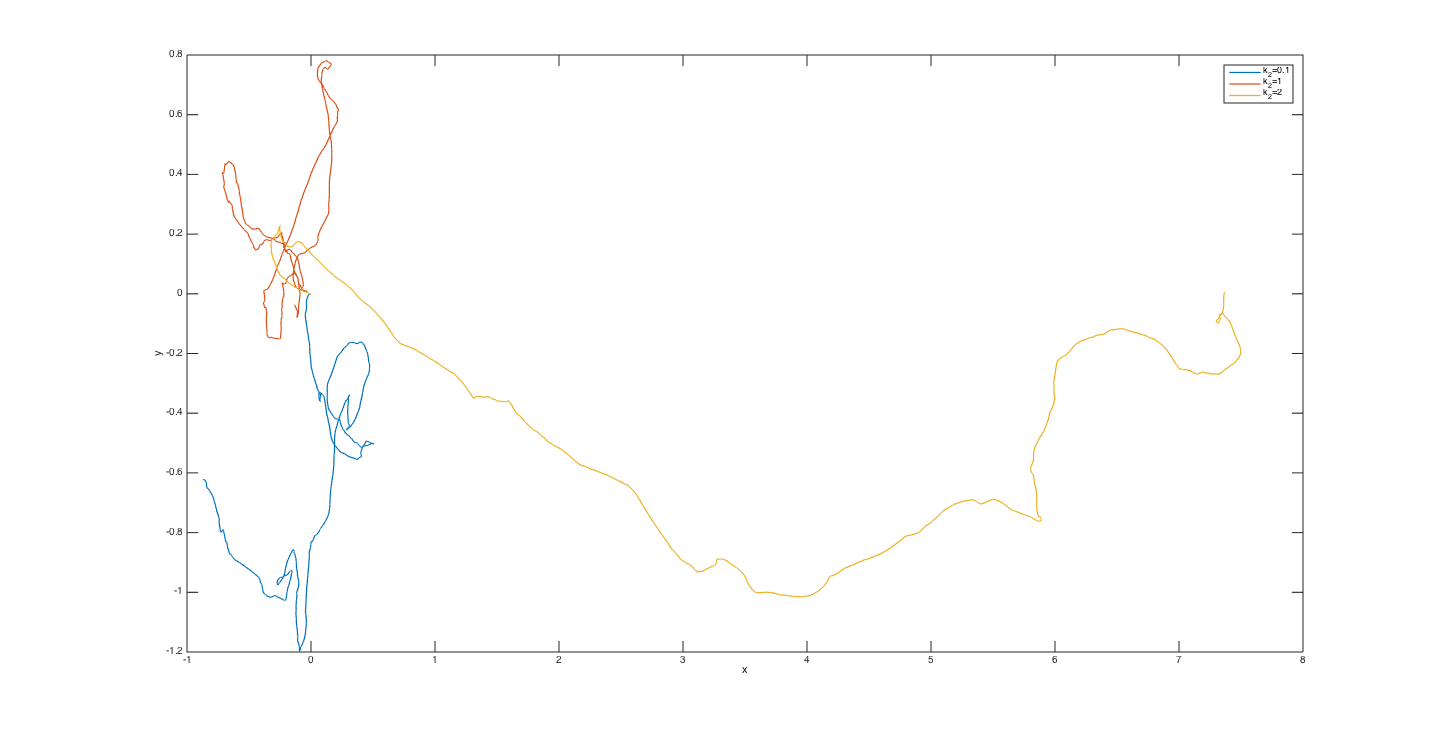}}\\	
		\subfloat[$\alpha=1$]{\includegraphics[scale=0.15]{./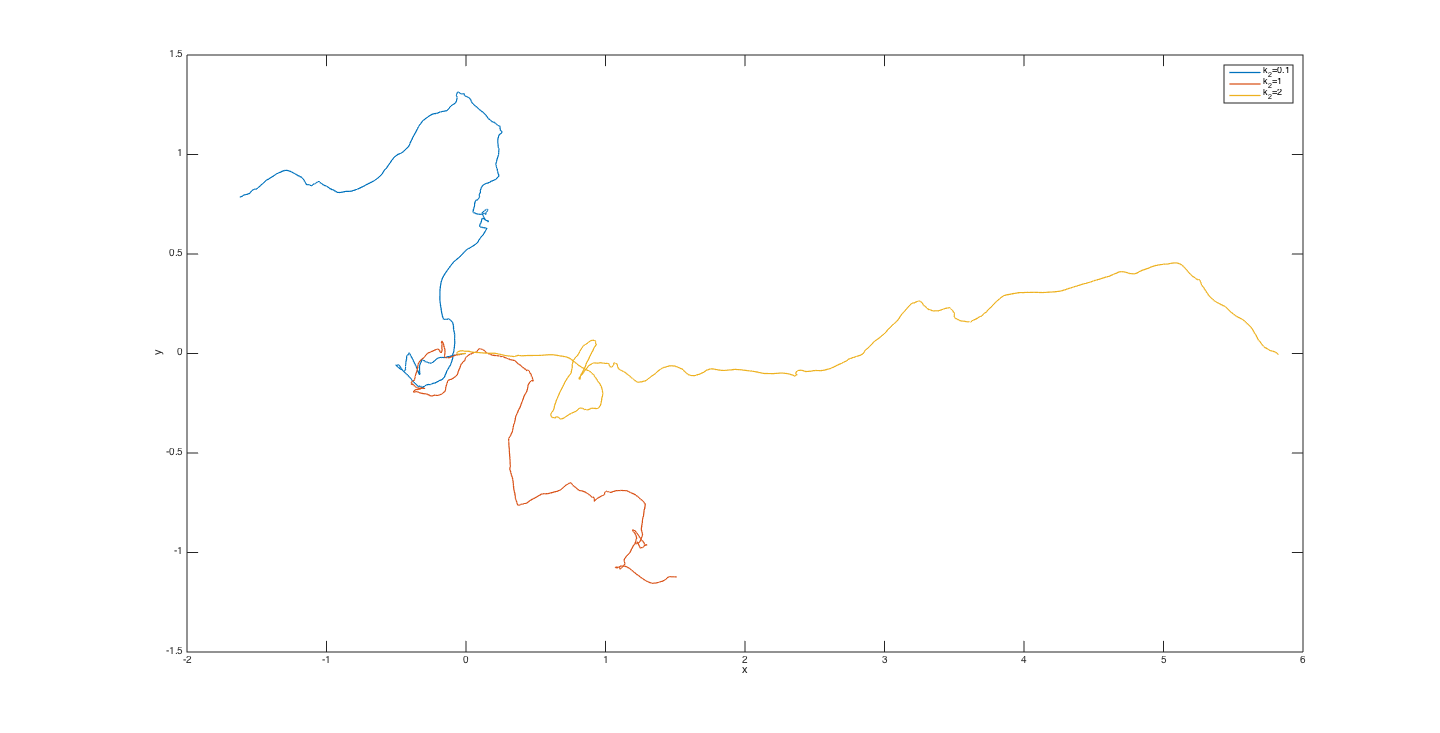}}\quad
	\subfloat[$\alpha=10$]{\includegraphics[scale=0.15]{./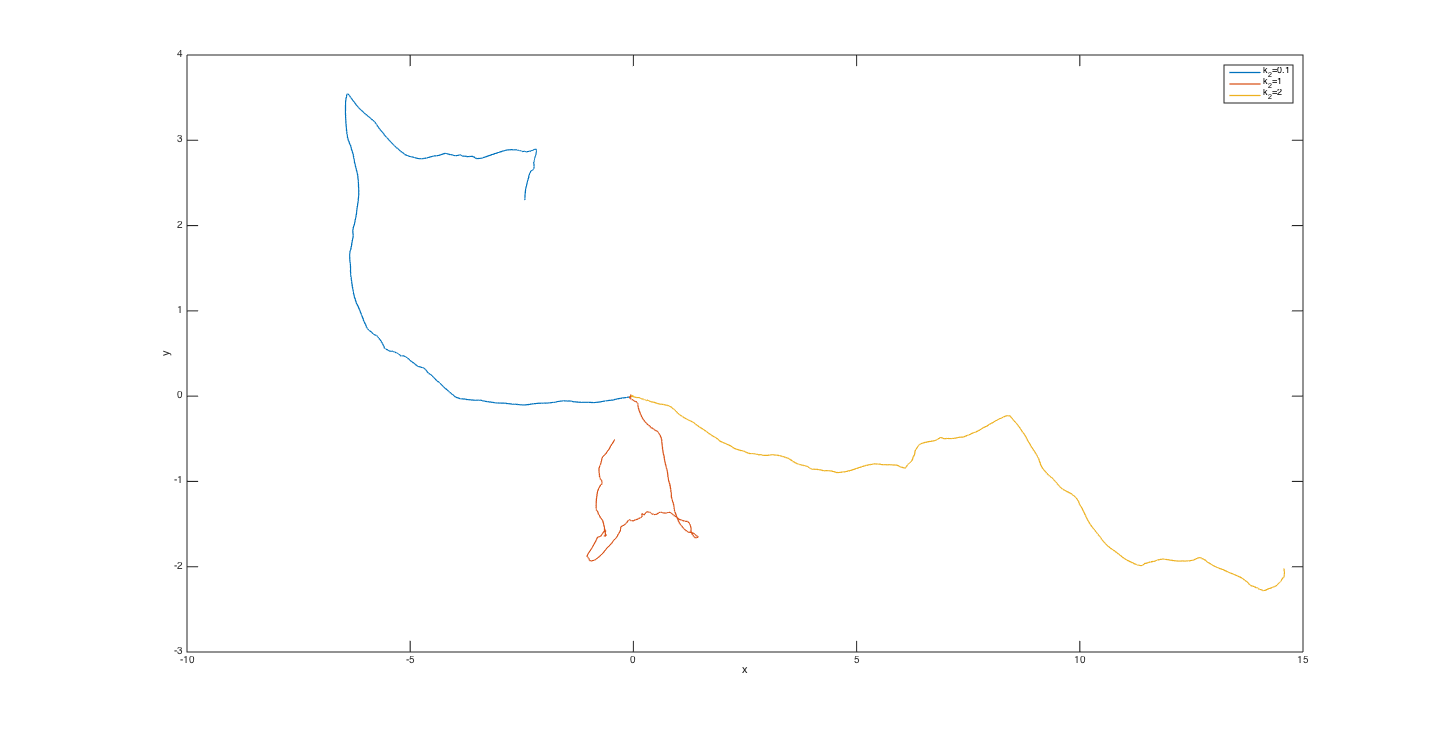}}\\	
	\caption{Numerical trajectories for a $\pi$-polarised initial condition and several values for $\alpha$ and $\kappa_2$.}\label{fig:CI}
\end{figure}

\begin{figure}[H]
\centering
\subfloat[$\alpha=0.01$]{\includegraphics[scale=0.5]{./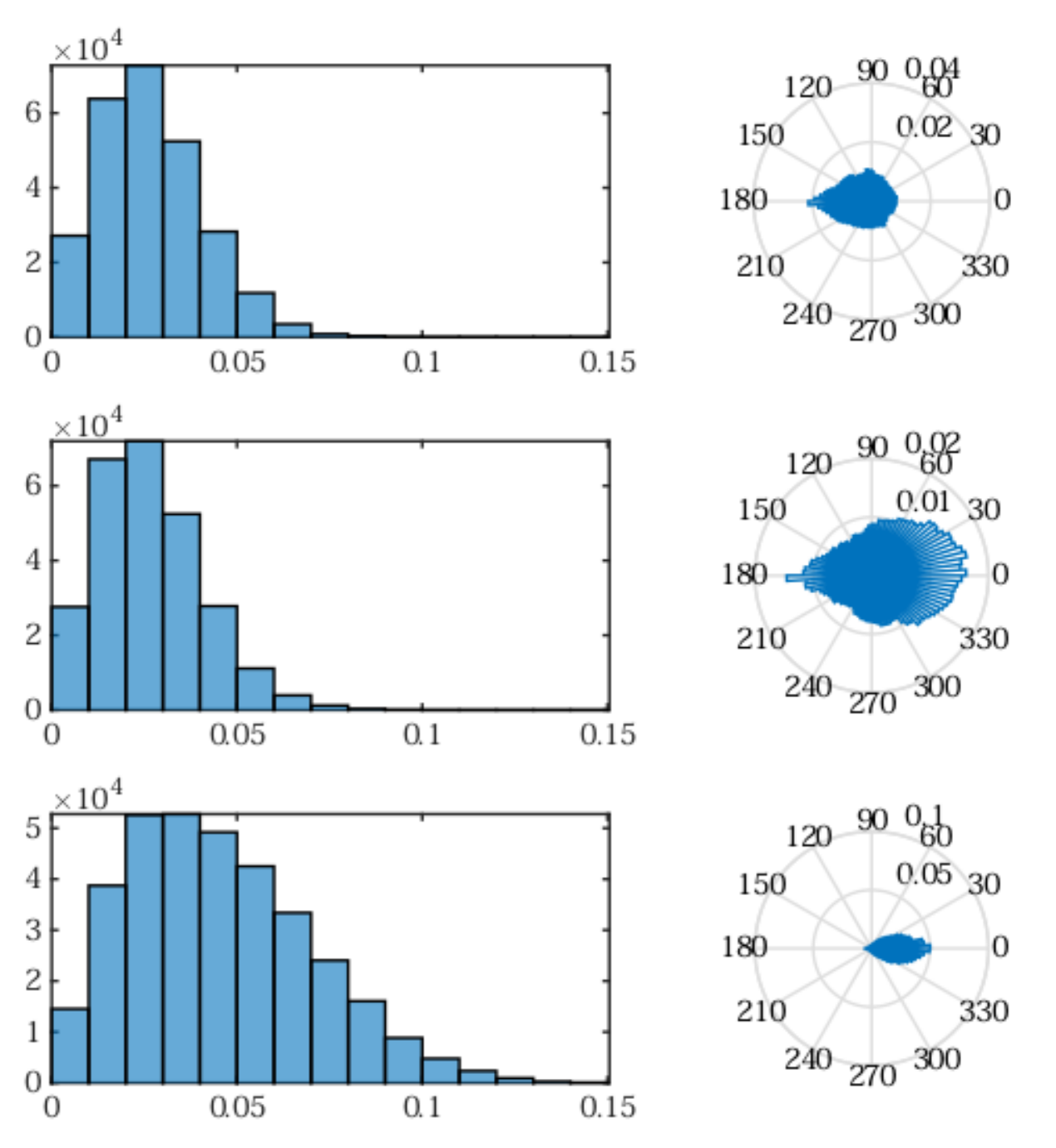}}\qquad
\subfloat[$\alpha=10$]{\includegraphics[scale=0.5]{./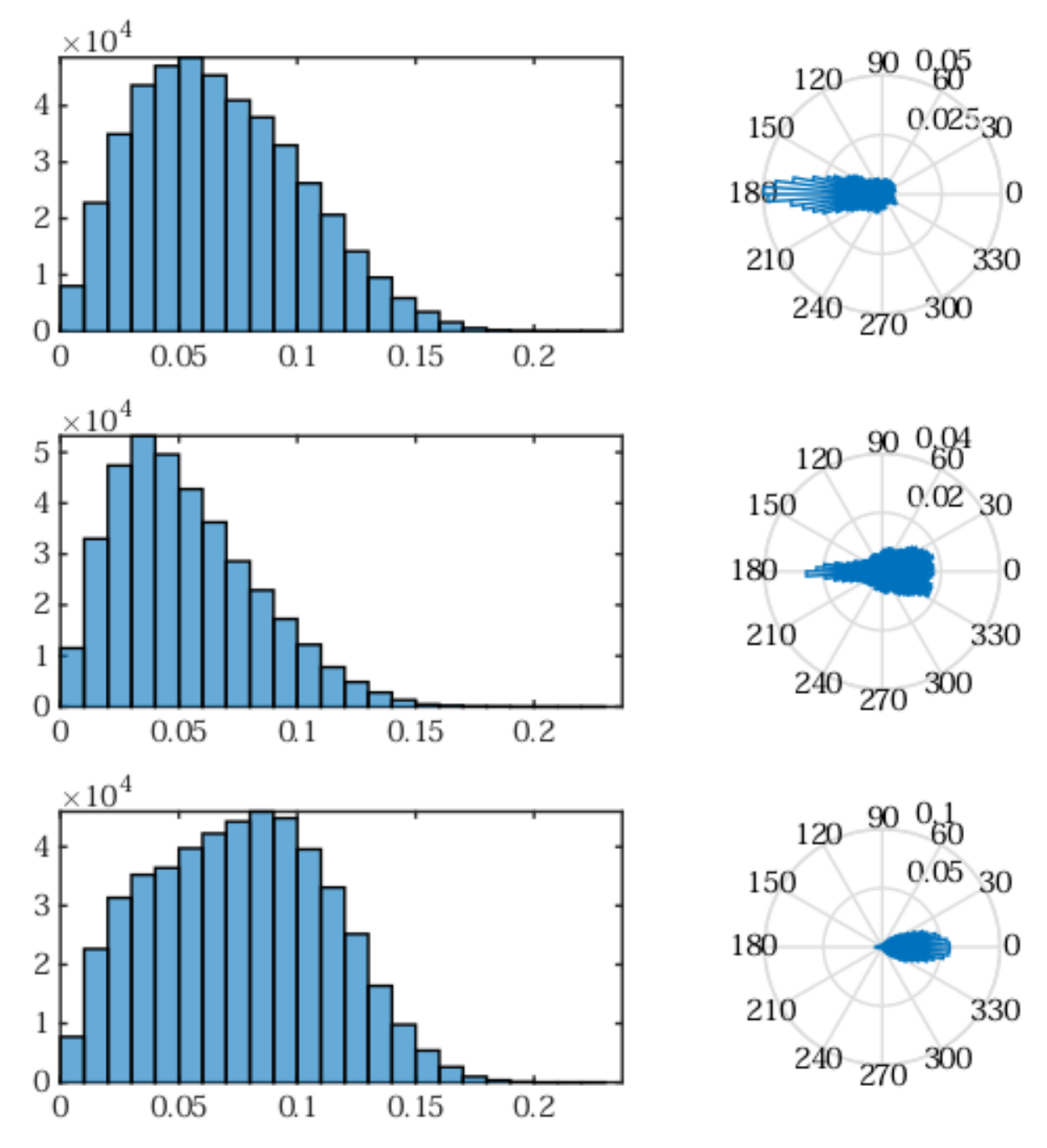}}
\caption{Distributions of the velocity module and orientation for $\alpha=0.01$ (left) and $\alpha=10$ (right) obtained for 200 realizations in each case, for a polarized initial condition. Up: $\kappa_2=0.1$, Middle: $\kappa_2=1$, Down: $\kappa_2=2$. Parameters: $T=50$, $\dx t = 10^{-3}$.}\label{fig:CI_distrib}
\end{figure}

We notice first that starting with an initial polarization increases neatly the polarization and the cells' velocity modules. The effect of a large $\alpha$ is therefore more visible. For lower values of $\alpha$, the initial polarisation has little effet and we observe the same behaviours as before. However, note that the trajectories are smoother due to the larger number of protrusions arising from the reproduction. \par 
Finally, intermediate values of $\kappa_2$ (lower than $\beta$) show that the self-enhanced cell machinery can play against the signal.

	\subsubsection{Time-dependent gradient of signal}
	We consider now time-dependent gradient of signal. 
	
	\paragraph{Single On-Off}
	We first explore the the case of an effective signal during half of the experiment time interval, before being put to zero until the end. The resulting trajectories are displayed in figure \ref{fig:OnOff}.
	
	\begin{figure}[H]
	\centering
	\subfloat[$\alpha=0.01$]{\includegraphics[scale=0.15]{./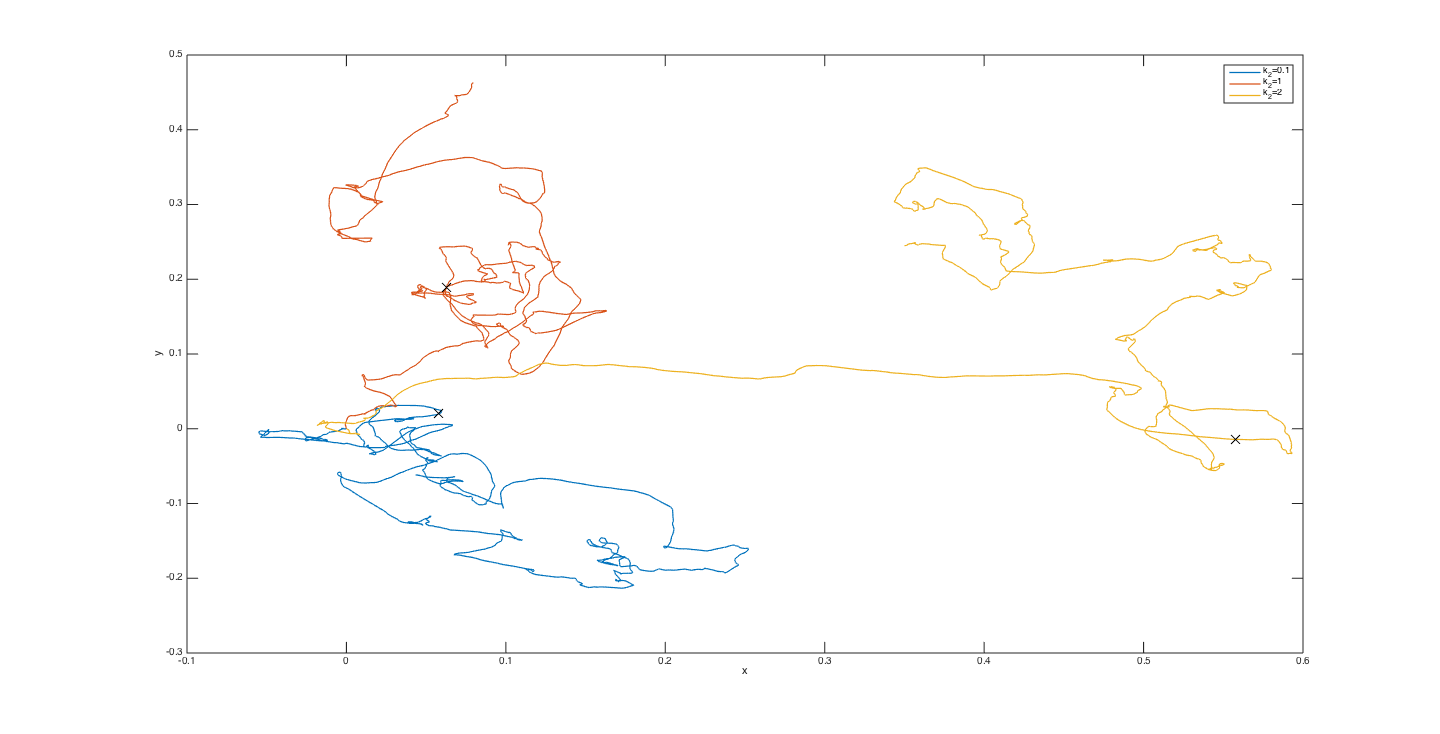}}\quad
	\subfloat[$\alpha=0.1$]{\includegraphics[scale=0.15]{./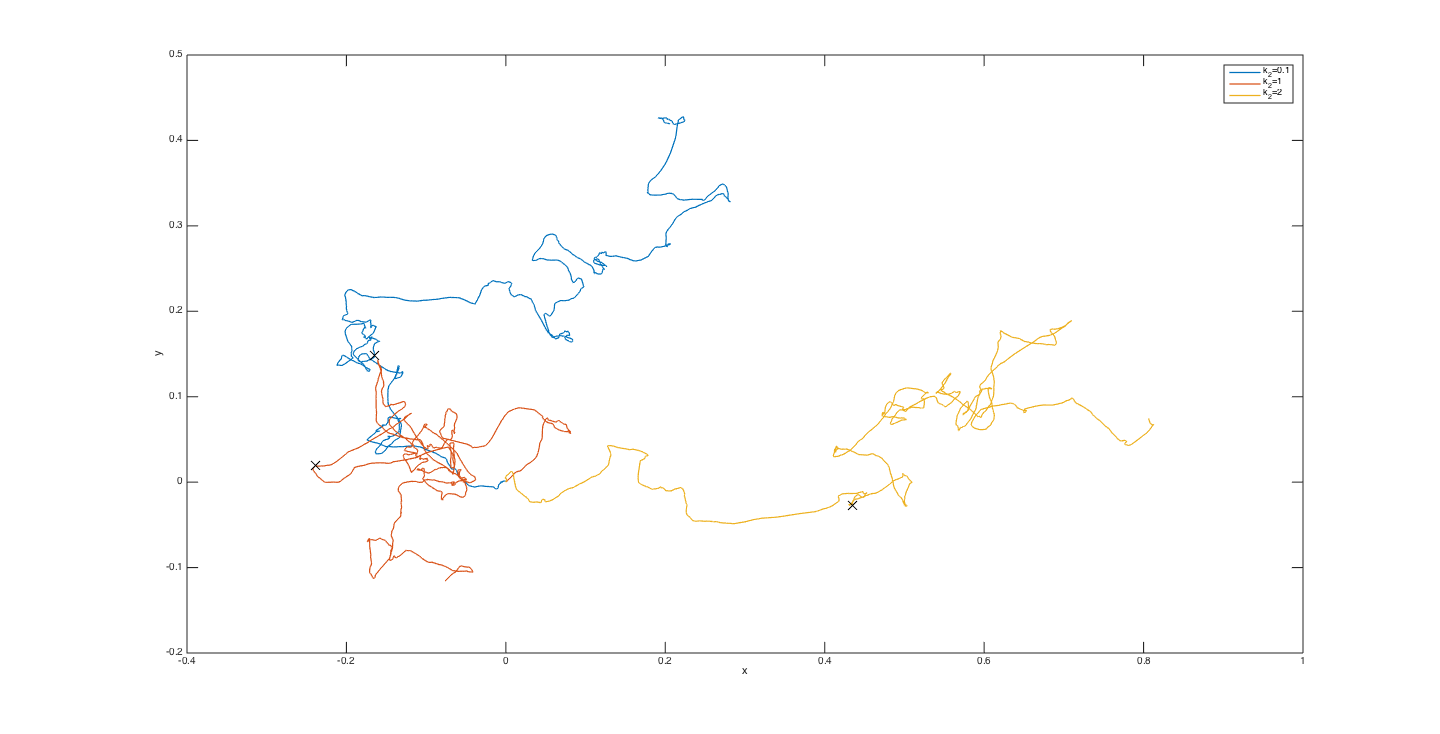}}\\	
		\subfloat[$\alpha=1$]{\includegraphics[scale=0.15]{./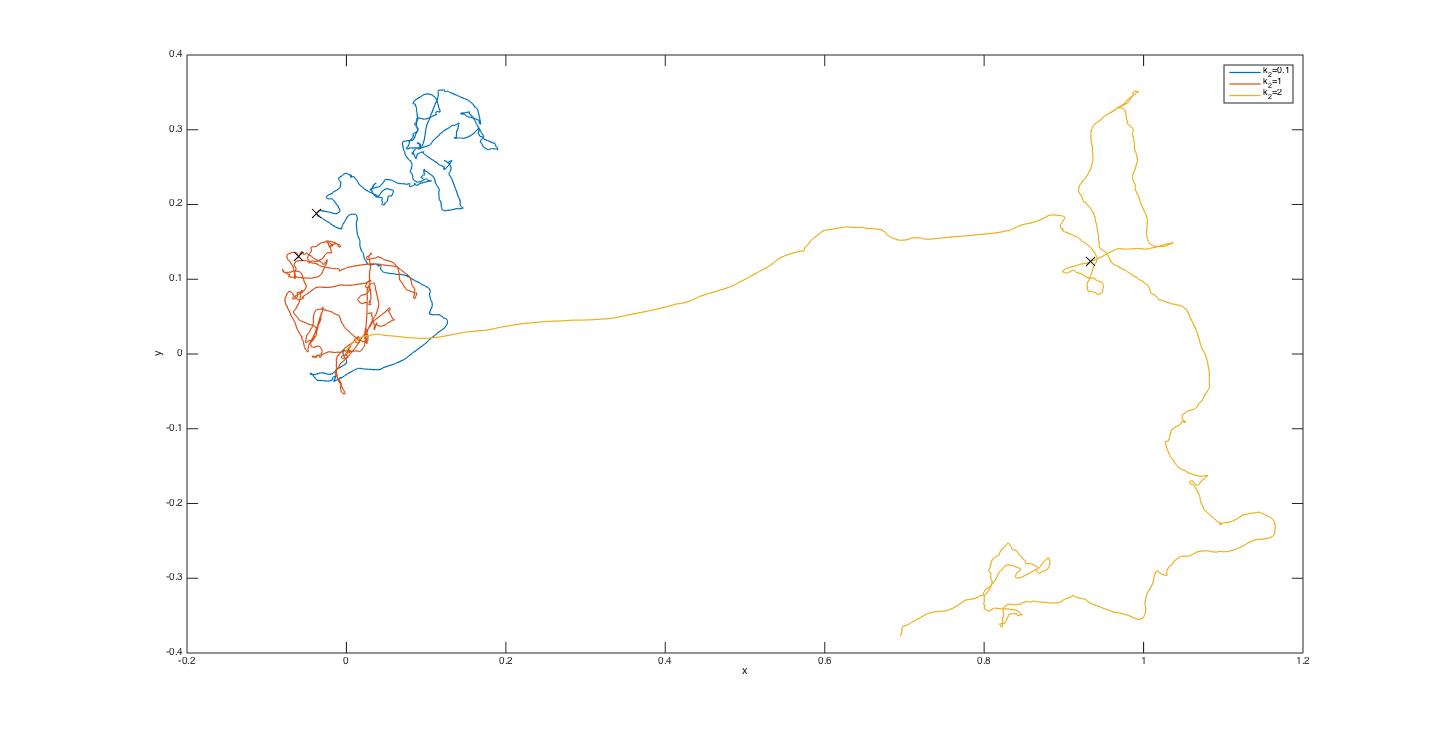}}\quad
	\subfloat[$\alpha=10$]{\includegraphics[scale=0.15]{./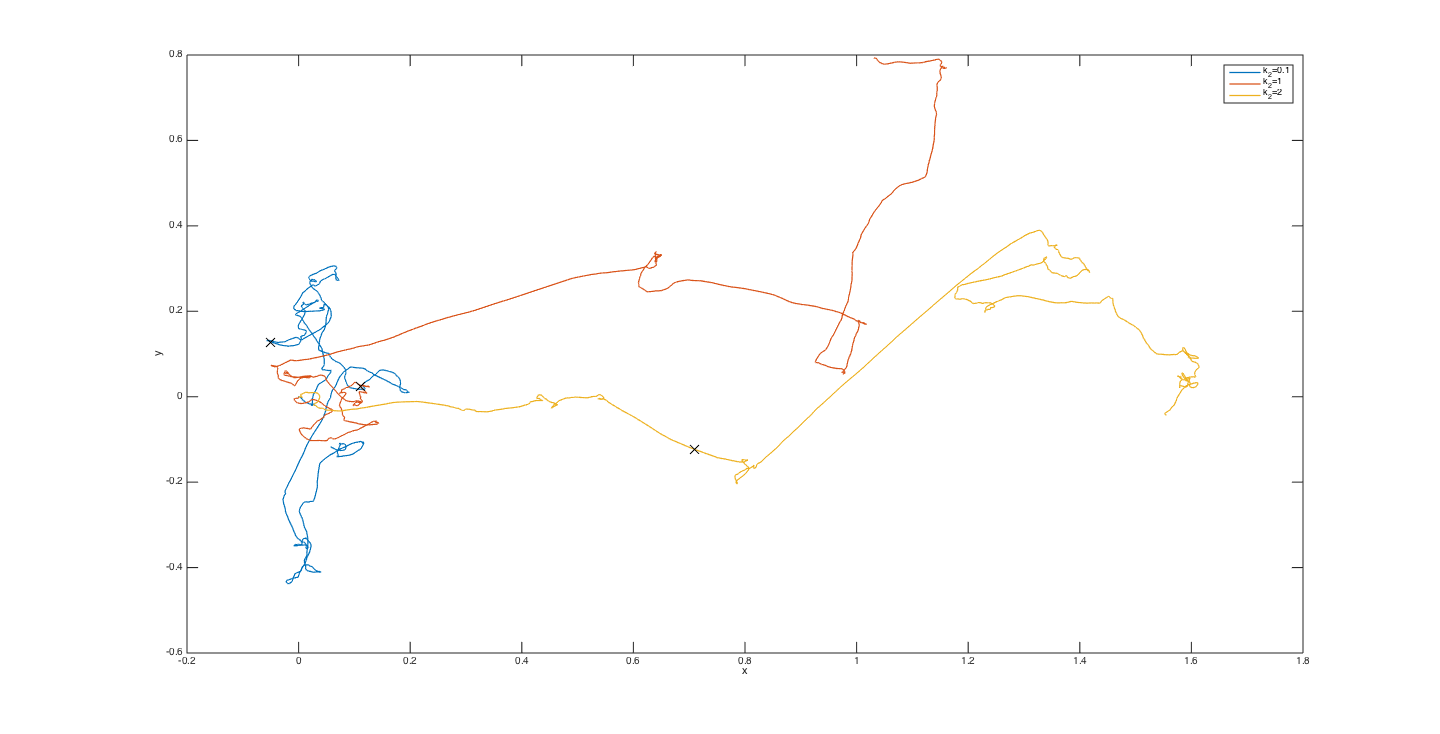}}\\	
	\caption{Numerical trajectories in a On-Off gradient of signal, for a non polarised initial condition and several values for $\alpha$ and $\kappa_2$. The $X$ symbols marks the time where the signal is switched off.}\label{fig:OnOff}
\end{figure}

From these simulations, we can make two observations. First, for low polarised cells ($\kappa_2=0.1$ and sometimes $\kappa_2=1$), if the first chosen direction is opposed to the one of the gradient, both phenomena tend to cancel each other, so that the trajectories are not persistent. Then, when the signal switches off, a better efficiency is recovered. 
If the cell is initially polarised in the direction of the gradient, then the opposite occurs: both phenomena cooperate and the persistence is lost when the signal is switched off. For the more sensitive cells, the initial absence of motion is not able to blur the signal, so that trajectories follow the gradient while it exists.


\paragraph{Periodic On-Off}
Then, the same kind of experiment is led for cycles of signalling: in figure \ref{fig:OnOff10h}, the signal is switched on or off every $10 \si{\hour}$, and in figure \ref{fig:OnOff1h} every $1\si{\hour}$. The resulting trajectories show some intermittent behaviours.  

\begin{figure}[H]
	\centering
	\subfloat[$\alpha=0.01$]{\includegraphics[scale=0.15]{./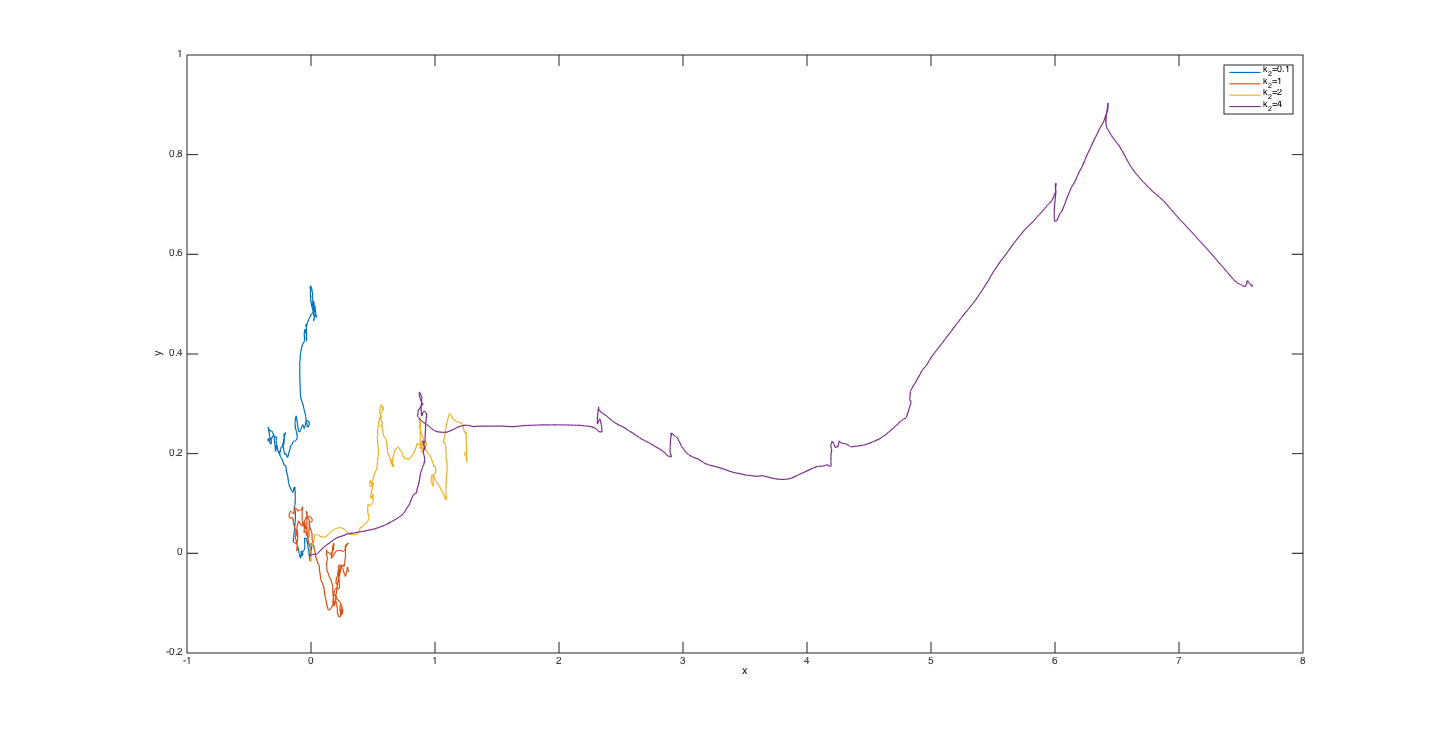}}\quad
	\subfloat[$\alpha=0.1$]{\includegraphics[scale=0.15]{./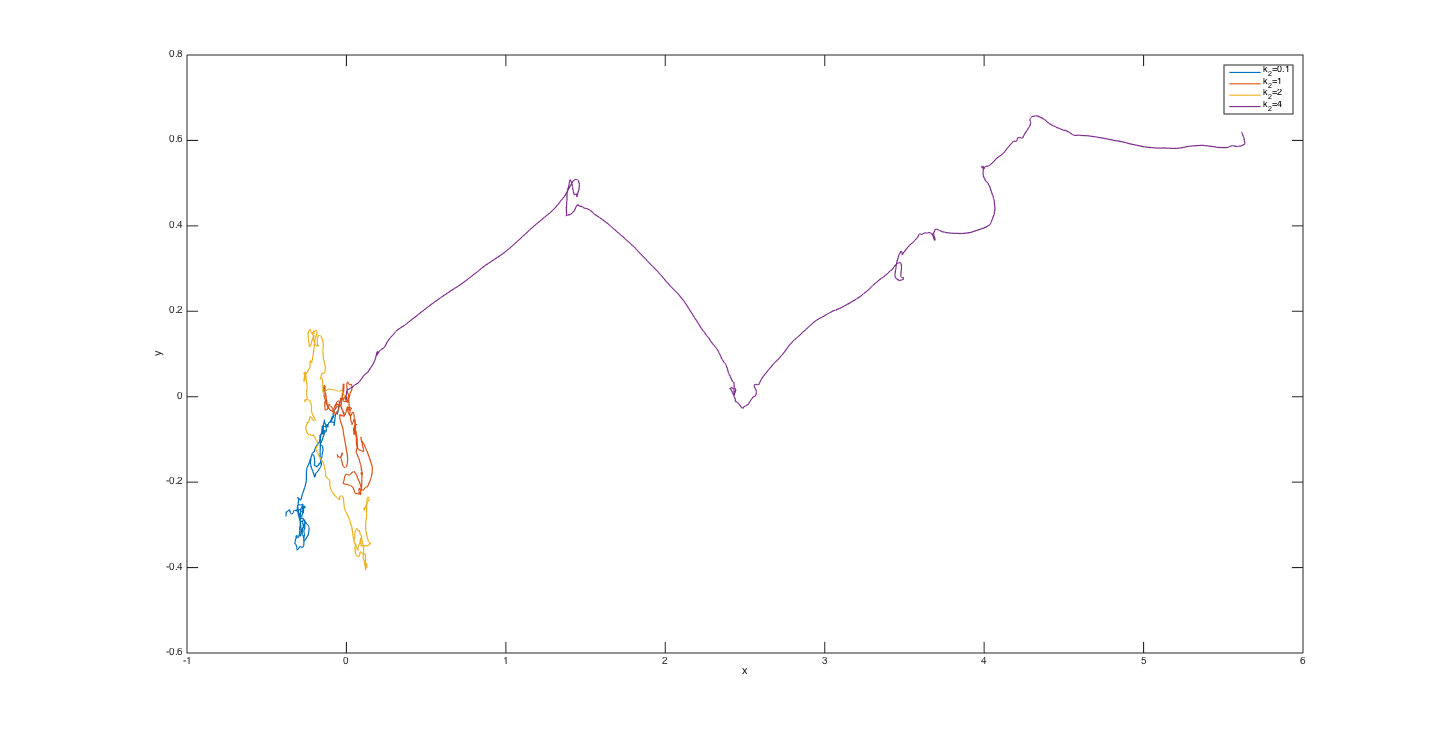}}\\	
		\subfloat[$\alpha=1$]{\includegraphics[scale=0.15]{./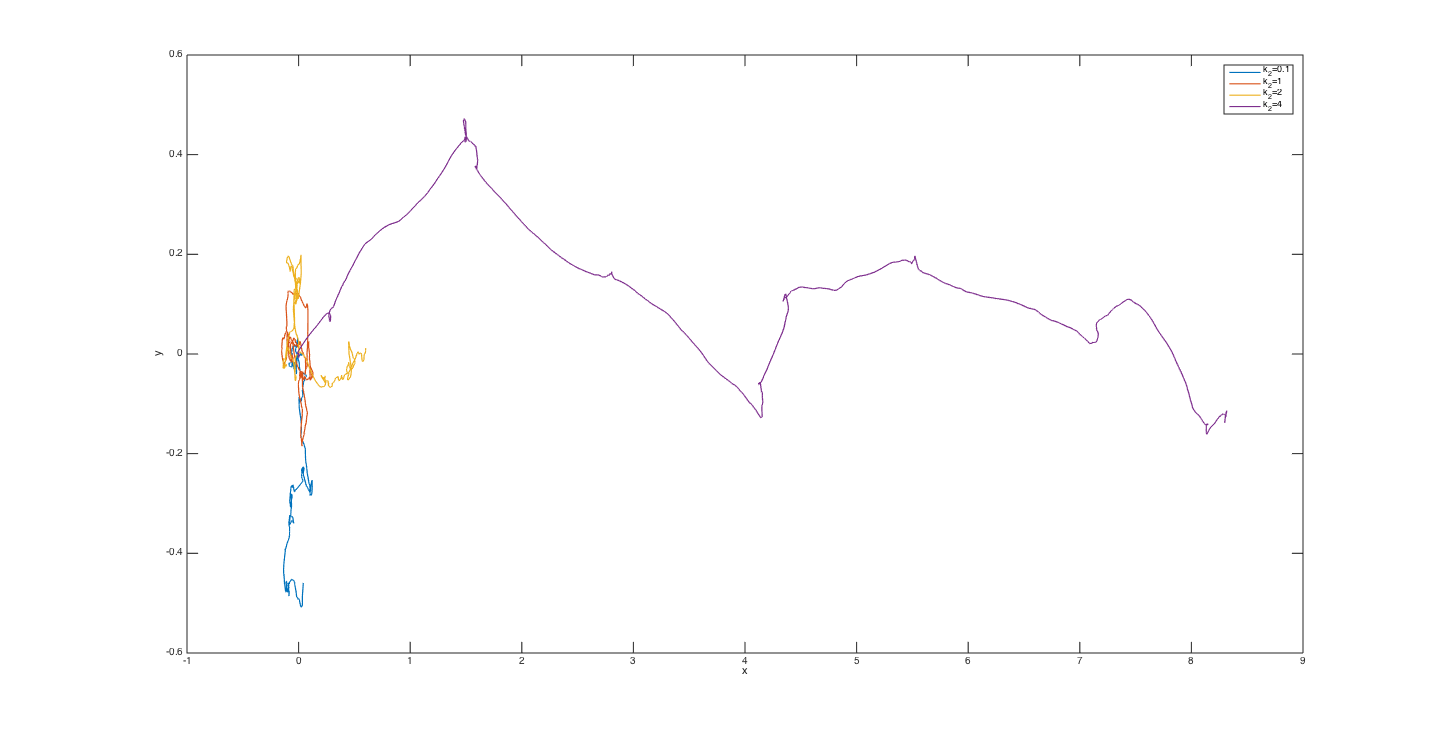}}\quad
	\subfloat[$\alpha=10$]{\includegraphics[scale=0.15]{./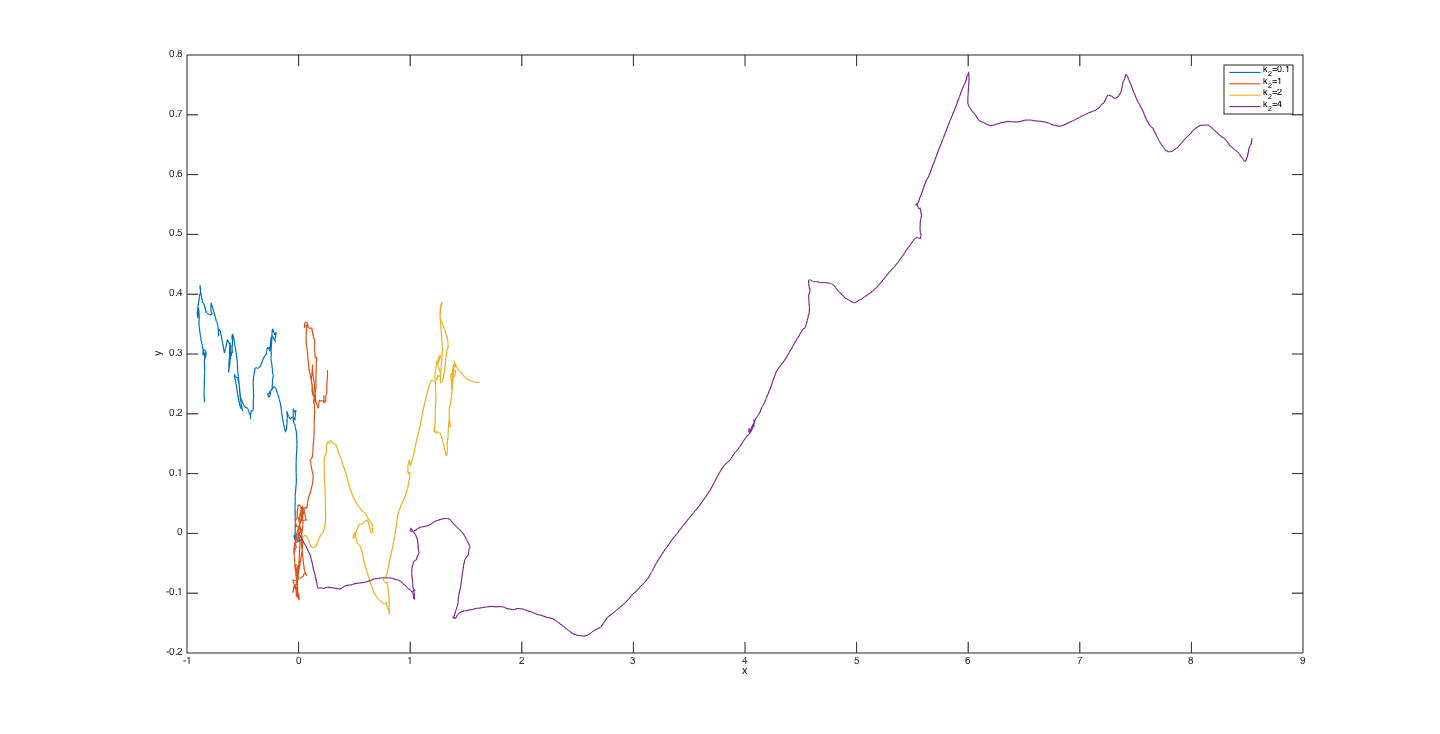}}\\	
	\caption{Numerical trajectories in a 10h On-Off setting, for a non polarised initial condition and several values for $\alpha$ and $\kappa_2$.}\label{fig:OnOff10h}
\end{figure}

\begin{figure}[H]
	\centering
	\subfloat[$\alpha=0.01$]{\includegraphics[scale=0.15]{./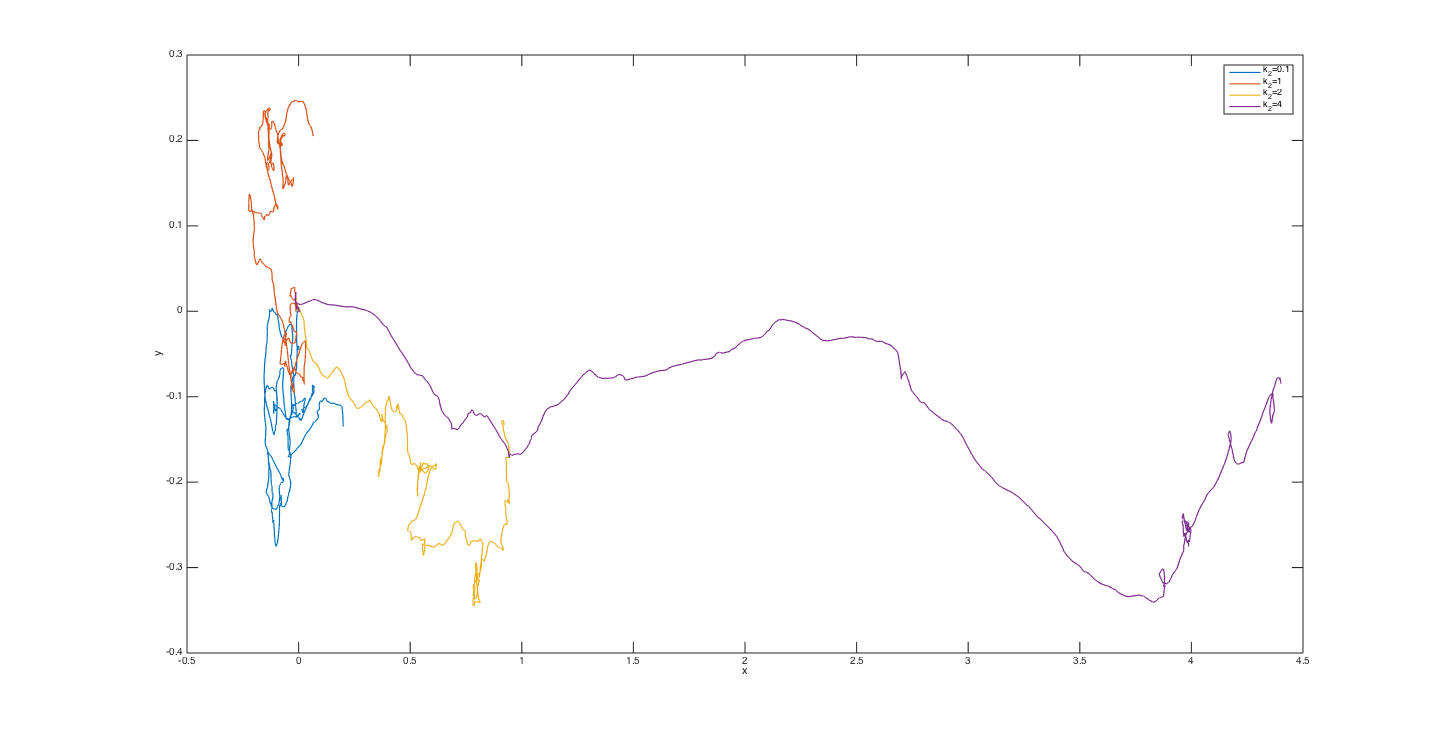}}\quad
	\subfloat[$\alpha=0.1$]{\includegraphics[scale=0.15]{./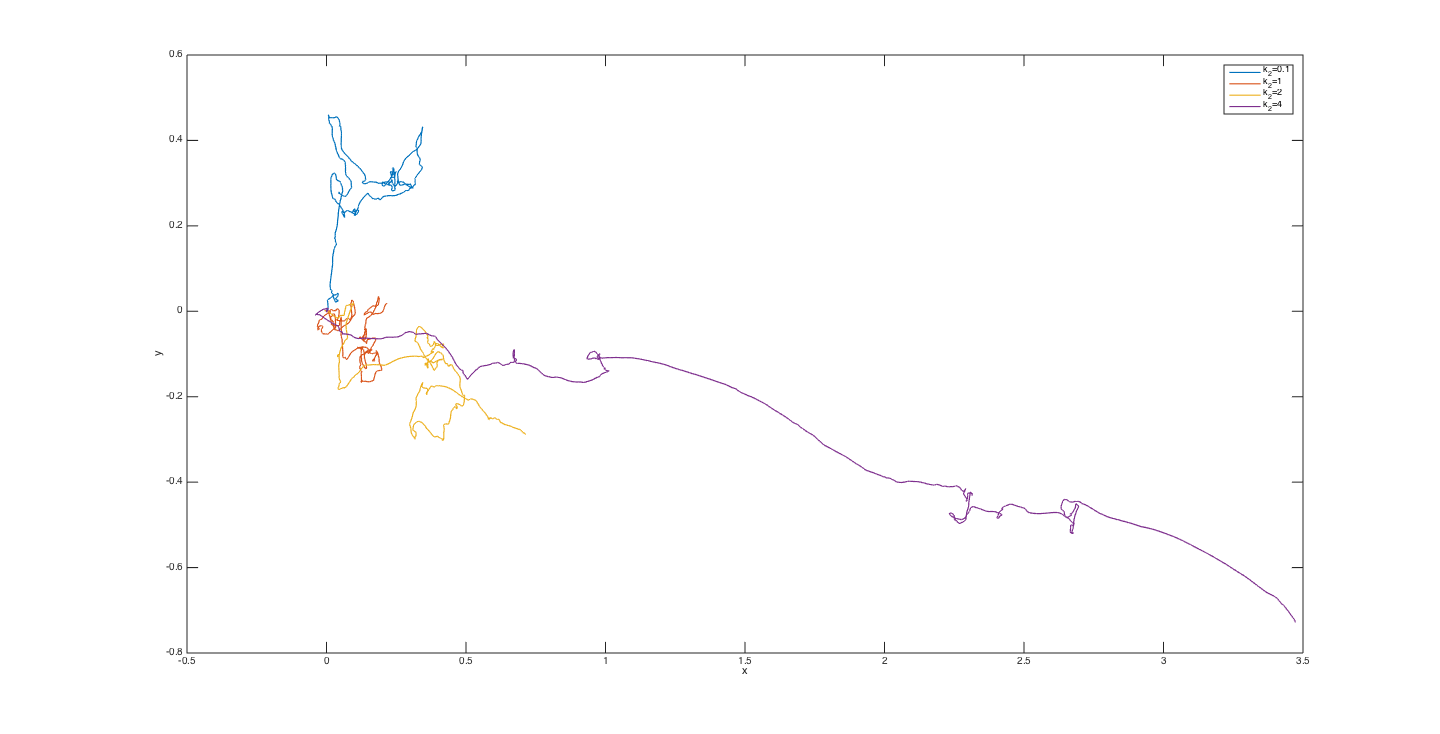}}\\	
		\subfloat[$\alpha=1$]{\includegraphics[scale=0.15]{./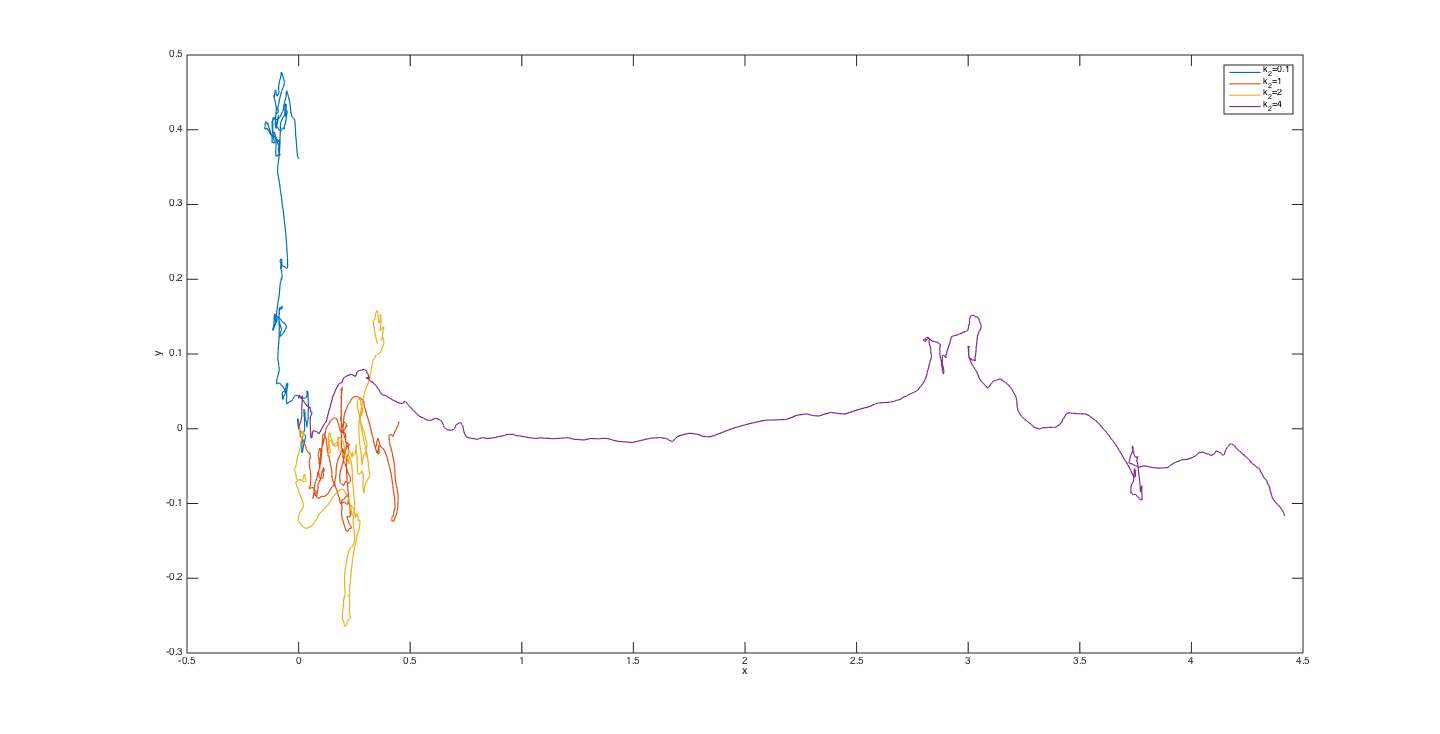}}\quad
	\subfloat[$\alpha=10$]{\includegraphics[scale=0.15]{./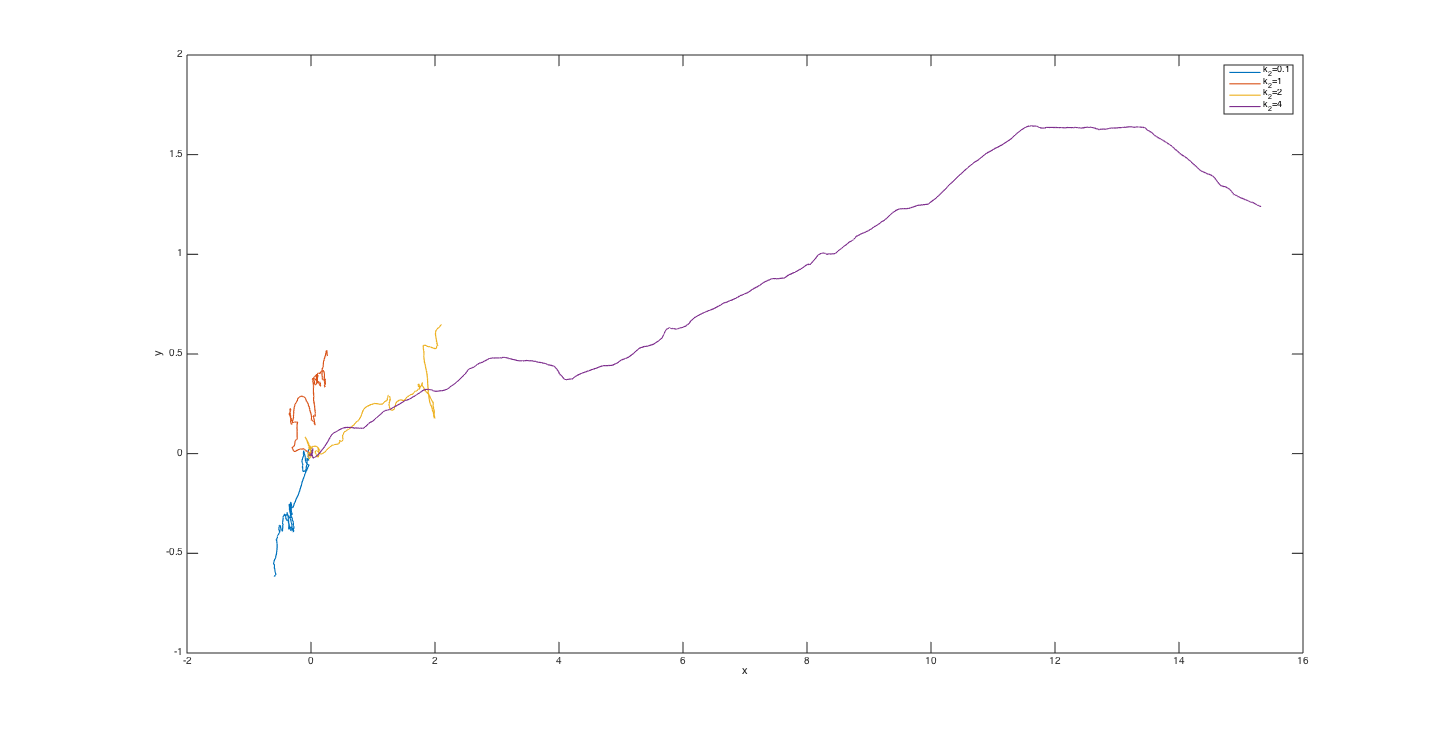}}\\	
	\caption{Numerical trajectories in a 1h On-Off setting, for a non polarised initial condition and several values for $\alpha$ and $\kappa_2$.}\label{fig:OnOff1h}
\end{figure}

\paragraph{Continuous time oscillations}
We choose now a signal sensitivity continuously oscillating around the value $k_2$. More precisely, take $\kappa_2(t,k_2)= k_2(1+\cos(t)) \in [0,2k_2]$. The trajectories are displayed in figure \ref{fig:OnOffcos}.

\begin{figure}[H]
	\centering
	\subfloat[$\alpha=0.01$]{\includegraphics[scale=0.15]{./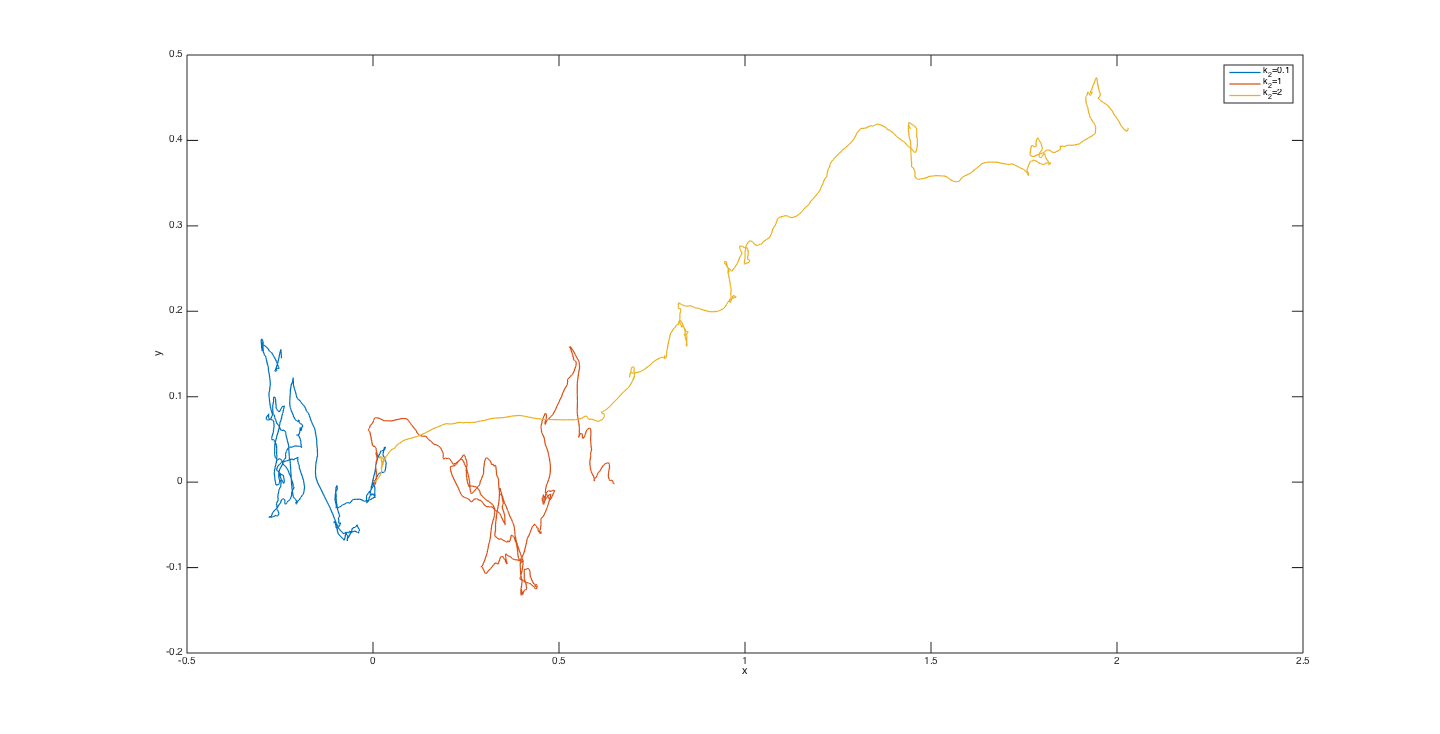}}\quad
	\subfloat[$\alpha=0.1$]{\includegraphics[scale=0.15]{./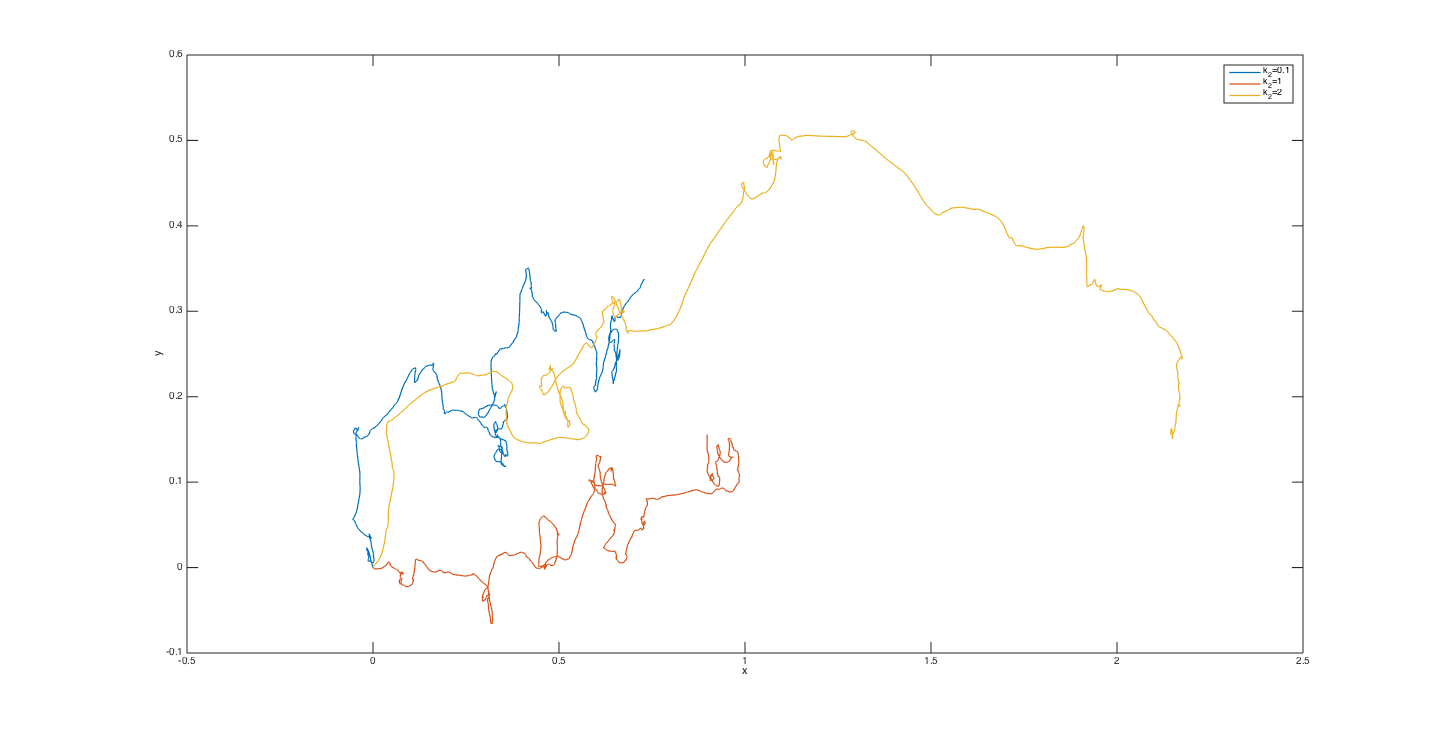}}\\	
		\subfloat[$\alpha=1$]{\includegraphics[scale=0.15]{./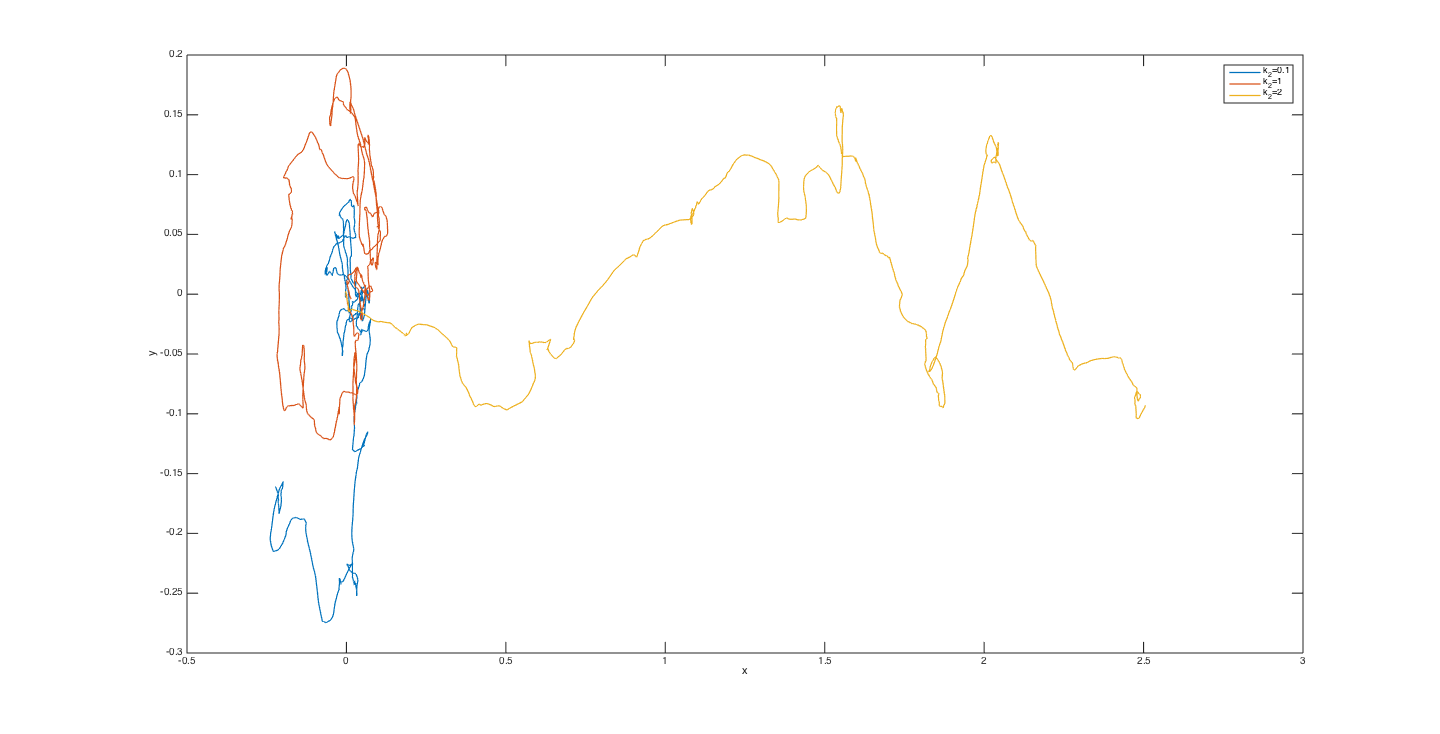}}\quad
	\subfloat[$\alpha=10$]{\includegraphics[scale=0.15]{./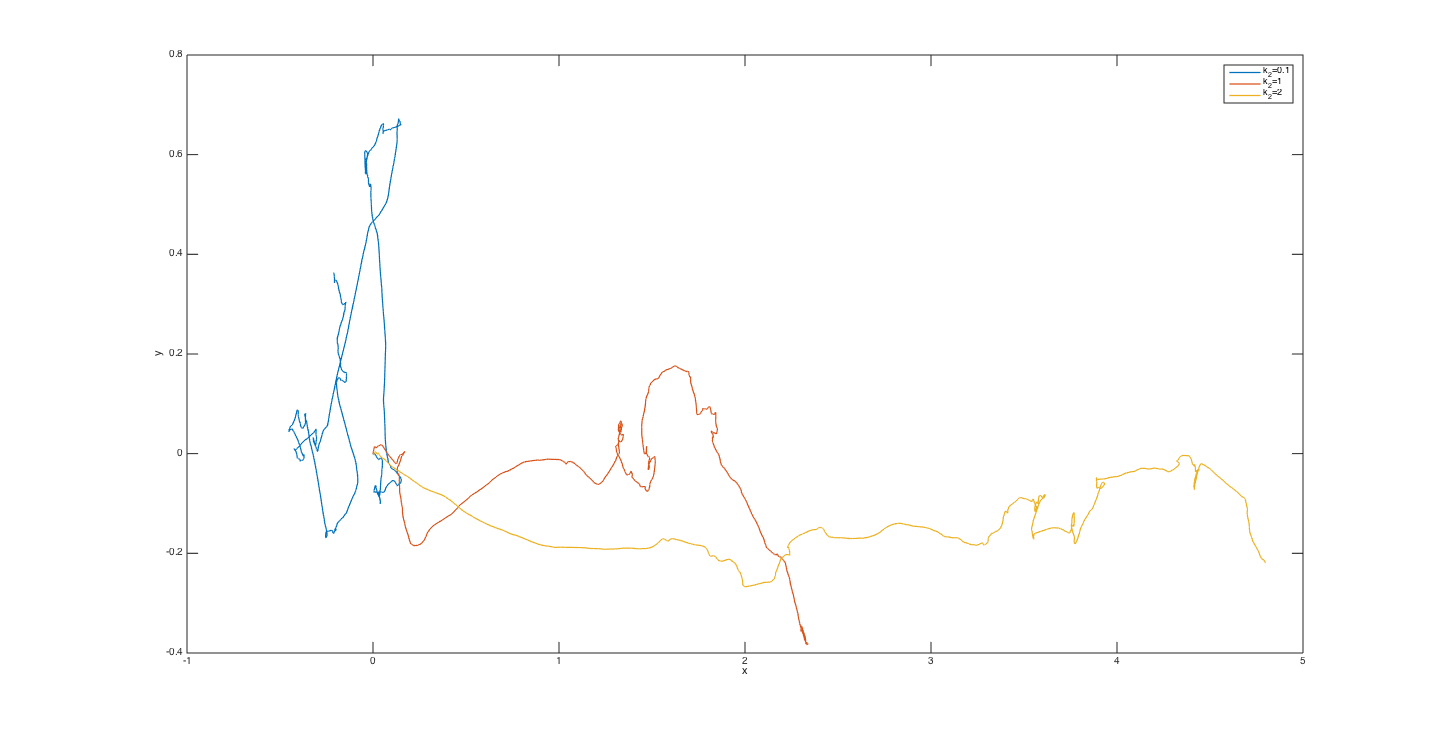}}\\	
	\caption{Numerical trajectories for an oscillating gradient of signal, a non polarised initial condition and several values for $\alpha$ and $\kappa_2$.}\label{fig:OnOffcos}
\end{figure}

The same type of behaviour as in the discontinuous rate can be observed. In particular, for $\alpha=0.1$, the succesive brownian and persistent phases are visible. Note that as explained before, the brownian phase for $\kappa_2=0.1$ corresponds to a persistent phase for $\kappa_2=2$, since each phenomenon prevents the other from winning.

\section{Conclusions and perspectives}
In this work, we have presented a 2D stochastic model for cell trajectories under chemical signalling. The model is based on an active particle model for cell trajectories developped in \citep{etchegaray:tel-01533458} that is able to provide a variety of trajectories without external signalling. 
Here, we take into account the positive guiding induced by a chemical gradient of signal favoring the protrusive activity in the same direction. For that purpose, we have adapted the reproduction rate in the protrusion population dynamics to deal with a bimodal potential. Moreover, we add a time dependence to the signal sensitivity so that the model captures the cell's reaction to time-dependent signals.
This work addresses the questions of the efficiency of guidance and of the balance with cells self-polarised internal machinery, that leads to non trivial behaviours. \par 
 
We have shown that the model is a well-posed non-homogeneous markovian process. Numerical simulations are performed using the thinning method to deal with the time-dependence. Finally, numerical experiments put to light some rich features of the model. Non intuitive features were observed. In particular, it is of interest that signalling can slow down a cell that is very poorly sensitive to it. Moreover, a cell can temporarily go in the wrong direction if it is highly polarised in the other direction.

Overall, this work shows that this model is able to capture non trivial cell behaviours. In further studies, other experimental settings will be investigated, such as gradients of signal switching directions, and space-dependent cues. In all cases, renormalization of the dynamics allows to derive a continuous model, for which theoretical information may be derived. The rigorous justification of this limit in the signalling case will be addressed. Finally, the question of optimal strategies to attain a target will be adressed, related to immune cells fate to reach pathogens in a body.

\newpage
\bibliographystyle{apalike}
\bibliography{Signal}
\end{document}